\definecolor{shadecolor}{gray}{0.9}
\theoremstyle{plain}  
\newtheorem{thm}{Theorem}[section] 
\newtheorem{lem}[thm]{Lemma} 
\newtheorem{prop}[thm]{Proposition} 
\newtheorem*{cor}{Corollary} 
\theoremstyle{definition}
\newtheorem{exmp}{Example}[section]
\newcommand{\TVaR}{\mathrm{TVaR}}
\theoremstyle{remark} 
\newtheorem*{rem}{Remark}
\newcommand{\diff}{\,\mathrm{d}}
\newcommand{\E}{\mathbb{E}}
\newcommand{\sS}{\mathcal{S}}
\newcommand{\R}{\mathbb{R}}
\newcommand{\Q}{\mathbb{Q}}
\newcommand{\p}{\mathbb{P}}
\newcommand{\Ind}{\mathbbm{1}}
\newcommand{\A}{\mathcal{A}}
\newcommand{\F}{\mathcal{F}}
\DeclareMathOperator*{\essinf}{ess\,inf}
\begin{document}

\title{Risk Measures with the CxLS property}
\author{Freddy Delbaen, Fabio Bellini, Valeria Bignozzi \\ and Johanna F.~Ziegel}
\maketitle
\begin{abstract} In the present contribution we  characterize law determined convex risk measures that have convex level sets at the level of distributions. 
By relaxing the  assumptions in \cite{Weber2006}, we show that these risk measures can be identified with a class of generalized shortfall risk measures. As a direct consequence, we are able to extend the results in \cite{Ziegel2013} and \cite{BelliniBignozzi2013} on convex elicitable risk measures and confirm that  expectiles are the  only elicitable coherent risk measures. 
Further, we provide a simple characterization of robustness for convex risk measures in terms of a weak notion of mixture continuity. \\\\
\noindent \textbf{Keywords:} Decision Theory, Elicitability, Convex level sets,  Shortfall risk measures, Mixture continuity, Robustness.
\end{abstract}

\section{Introduction}
The main goal of this paper is to investigate and characterize risk measures that have ``convex level sets at the level of distributions" (CxLS), in the sense that 
\[
\rho(F)=\rho(G)=\gamma \Rightarrow \rho(\lambda F + (1-\lambda) G) = \gamma, \mbox{ for each } \lambda \in (0,1),
\]
where $F$ and $G$ are probability distributions.
The financial interpretation of this property is that any mixture of two equally risky positions remains with the same risk. In the axiomatic theory of risk measures, it is costumary to impose convexity or quasiconvexity requirements with respect to the pointwise sum of the risks, viewed as random variables on a common state space $\Omega$, in order to model an incentive to diversification. 
On the contrary, the CxLS property arises naturally as a necessary condition for elicitability, that is defined as the property of being the minimizer of a suitable expected loss. More formally, we say that a law determined risk measure $\rho$ is elicitable if there exists a scoring function $S: \R^2 \to \R$ such that
\[
\rho(F)=\arg \min_x \int S(x,y) \diff F (y) .
\]
It has been suggested by several authors that elicitability is a relevant requirement in connection with backtesting and with the comparison of the accuracy of different forecasts of a risk measure. We refer the interested reader to \citet{Gneiting2011}, \citet{Ziegel2013}, \citet{BelliniBignozzi2013}, \citet{EmbrechtsETAL2014}, \citet{Davis2013} and the references therein. \\
In Decision Theory, the CxLS property is usually known as Betweenness; it is one of the possible relaxations of the independence axiom of the Von Neumann-Morgenstern theory \citep[see for example][] {Dekel1986, Chew1989}.
In the seminal paper of \citet{Weber2006}, the author proved that, under additional conditions that we  discuss in detail in Section 3, a monetary risk measure with upper and lower convex level sets at the level of distributions belongs to the class of shortfall risk measures introduced by \citet{FollmerSchied2002} as follows:
\[
\rho(F)= \inf \left\{ m \in \R \mid \int \ell (x-m) \diff F(x) \leq 0\right\},
\]
for a suitable nondecreasing and nonconstant loss function $\ell \colon \R \to \R$. \\
In comparison with Weber's theorem, we  limit ourselves to the more restricted case of convex risk measures. On the contrary, we  relax Weber's additional conditions in order to completely characterize convex law determined risk measures with the CxLS property. We  see in Theorem \ref{characterization} that such risk measures correspond to generalized shortfalls, in which the loss function can also assume the value $+\infty$.
As a consequence, we confirm the result of \citet{Ziegel2013} \citep[see also][]{Weber2006BelliniBignozzi2013}, and show that the only elicitable coherent law determined risk measures are expectiles, that can be defined as the shortfall risk measures associated to the loss function 
\[
\ell_{\alpha}(x)=\alpha x^+ -(1-\alpha) x^-, 
\]
for $\alpha \geq \frac{1}{2}$.
For more information on expectiles we refer to \citet{NeweyPowell1987}, \citet{Delbaen2013}, \citet{BelliniETAL2014}. \\
As a byproduct of our analysis, we provide a simple characterization of robustness for convex risk measures. After the works of \citet{KratschmerETAL2013}, \citet{StahlETAL2012} and others, the notion of robustness for a law determined risk measure is usually identified with its continuity with respect to $\psi$-weak convergence, defined by
\[
F_n \overset{\psi}{\to} F \;\mbox { if } F_n \to F \mbox{ weakly and } \int \psi \diff F_n \to \int \psi \diff F,
\]
where $\psi \colon \R \to [0, +\infty)$ is a continuous gauge function satisfying $\psi \geq 1$ outside some compact set and $\lim_{x\to \infty}\psi(x)=+\infty$. 
We  show in Proposition \ref{WC_prop} that, for a convex law determined risk measure $\rho$, robustness is equivalent to a weak form of mixture continuity:
\[
\lim_{\lambda \to 0^+} \rho (\lambda \delta_x +(1-\lambda) \delta_y) = \rho(\delta_y), \mbox{ for each } x,y \in \R.
\]
It is well known that in the literature on risk measures several systems of notations and sign conventions are coexisting. For ease of reference, in this paper we will follow \citet {Delbaen2012}, so the notion of a convex risk measure will be replaced by that of a concave utility function. \\ 
The paper is structured as follows: In Section 2 we discuss robustness issues, while in Section 3 we provide the characterization of concave, law determined utilities with CxLS. Auxiliary results are moved to the Appendix. 

\section{Continuity properties of law determined concave monetary utility functions}
\subsection{Notations and preliminaries}
In this subsection, we set our notation and review basic properties of concave utility functions. All results that are quoted without reference can be found in \citet{Delbaen2012}. \\
Let $(\Omega,\mathcal{F},\mathbb{P})$ be an atomless probability space. This is not a very big restriction as it simply means that on $\Omega$ we can define a random variable with a continuous distribution function. Several statements are only valid for atomless spaces so we will use this as a standing assumption and will not repeat it.  A utility function is any function $u \colon L^{\infty} \to \R$. 
We say that $u$ is translation invariant if $u(\xi +h)=u(\xi)+h$, for each $h \in \R$; it is monotone if $\xi \leq \eta \mbox{ a.s.} \Rightarrow u(\xi) \leq u(\eta)$; it is monetary, if $u$ is translation invariant, monotone and satisfies $u(0)=0$.  \\
The properties of a monetary utility function can be recovered by means of its acceptance set
\[
\mathcal{A}:= \{ \xi \in L^{\infty} \mid u(\xi) \geq 0 \};
\]
in particular, $u$ is concave if and only if the acceptance set $\mathcal{A}$ is convex. \\
A utility function $u$ is law determined (or law invariant) if 
\[
\operatorname{Law}(\xi)=\operatorname{Law}(\eta) \Rightarrow u(\xi)=u(\eta).
\]
A law determined utility can be seen as a function on $\mathcal{M}_{1,c}(\R)$, the set of probability measures on $\R$ with compact support. The monotonicity property implies that for each $F, G \in \mathcal{M}_{1,c}(\R)$
\[
F \leq_{st} G \Rightarrow u(F) \leq u(G),
\]
where $\leq_{st}$ denotes the usual stochastic order, also known as first order stochastic dominance; if $u$ is concave then also 
\[
F \leq_{cv} G \Rightarrow u(F) \leq u(G),
\]
where $\leq_{cv}$ is the concave order, also known as second order stochastic dominance \citep[see for example] []{BauerleMuller2006,ChernyGrigoriev}. 
The set of distributions of acceptable positions will be denoted by
\[
\mathcal{N}:= \{ \operatorname{Law}(\xi) \mid \xi \in \mathcal{A} \}.
\]
A law determined functional $u \colon \mathcal{M}_{1,c} \to \mathbb{R}$ is weakly continuous if $u(F_n) \to u(F)$ whenever $F_n \to F$ weakly; it is $\psi$-weakly continuous if 
\[
F_n \overset{\psi}{\to} F \,  \Rightarrow u(F_n) \to u(F).
\]
Clearly, since $\psi$-weak convergence implies weak convergence, it follows that 
weak continuity implies $\psi$-weak continuity and that a weakly closed set is $\psi$-weakly closed for each gauge function $\psi$. 
A utility function $u \colon L^{\infty} \to \mathbb{R}$ has the Fatou property if for each $\xi_n \in L^{\infty}$, with
$\sup_n \Vert \xi_n \Vert_{\infty} < +\infty$, it holds that 
\begin{equation}
\label{Fatou}
\xi_n \overset{\mathbb{P}}{\to} \xi \Rightarrow u(\xi) \geq \limsup_{n \to +\infty} u(\xi_n).
\end{equation}
A monetary concave utility function $u \colon L^{\infty }\rightarrow \mathbb{R}$ with the Fatou property has the following dual representation: 
\begin{equation} \label {dual}
u(\xi)=\inf\big\{ \E_{\mathbb{Q}}[\xi] + c(\Q)\; |\;  \Q \in \mathbf{P} \big\},
\end{equation}
where $\mathbf{P}=\{\mathbb{Q} \mid \mathbb{Q} \ll \mathbb{P}\}$ is the set of probability measures that are absolutely continuous with respect to $\mathbb{P}$ and
the penalty function $c\colon \mathbf{P} \to [0, +\infty] $ is convex and lower semicontinuous. We will often identify $\mathbf{P}$ with a subset of $L^1_+$ via the Radon-Nikodym derivative $\diff \mathbb{Q} / \diff \mathbb{P}$. 
If the monetary concave utility function $u$ is law determined, then $u(\xi) \leq \E_{\mathbb{P}} [ \xi ]$ (see Lemma \ref{conditioning} in the Appendix), which
implies that  $c(\mathbb{P})=0$. \\
The following Kusuoka representation holds (\citep{Kusuoka}):
\begin{equation} \label{Kusuoka}
u(\xi ) =\inf\Big\{ \int u_{\alpha }(\xi )\nu (\diff\alpha )+\overline{c}(\nu ) \mid \nu \in \mathcal{M}_{1} [0,1]\Big\},
\end{equation}
where $\mathcal{M}_{1}[0,1]$ is the set of all probability measures on $[0,1]$, $\overline{c}\colon \mathcal{M}_{1}[0,1]\rightarrow [0, +\infty]$ is convex and lower semicontinuous, $u_{\alpha}$ represents the Tail Value-at-Risk  (TVaR) at level $\alpha \in (0,1]$ defined by
\[
u_{\alpha}(\xi)=\frac{1}{\alpha} \int_{0}^{\alpha} q_x(\xi)\diff x,
\] 
where $q_{x}$ denotes a quantile function at level $x$ and $u_{0}(\xi)=\essinf(\xi)$. \\
We say that a monetary concave utility $u$ has the weak compactness (WC) property if the penalty function $c\colon \mathbf{P}\rightarrow [0, +\infty] $ in the dual representation (\ref{dual}) has lower level sets $\mathcal{S}_m :=\{\Q \in \mathbf{P}\mid c(\Q) \leq m \}$ that are compact in the weak topology $\sigma(L^1, L^{\infty})$. 
The WC property is equivalent to the so called Lebesgue property:
\begin{equation} \label{Lebesgue}
\xi _{n}\overset {\mathbb{P}}{\rightarrow} \xi,\; \sup_n\Vert \xi _{n} \Vert_{\infty} < +\infty
\Rightarrow u(\xi _{n})\rightarrow u(\xi ),
\end{equation}
which is a stronger continuity requirement than the Fatou property.
If $u$ is law determined, then the WC property is equivalent to the following property of  
the penalty function $\overline{c}$ in the Kusuoka representation:
\[
\nu (\left\{ 0\right\} )>0 \Rightarrow \overline{c}(\nu )=+\infty.
\]
In the coherent case the dual representation becomes 
\begin{equation*} 
u(\xi)=\inf\big\{ \E_{\mathbb{Q}}[\xi] \mid \Q \in \mathcal{S} \big\},
\end{equation*}
where $\mathcal{S} \subset \mathbf{P}$ is a convex set of probability measures, closed in the $\sigma(L^1, L^\infty)$ topology. The WC property is equivalent to the compactness of $\mathcal{S}$ in the $\sigma(L^1, L^\infty)$ topology, that by the Dunford-Pettis theorem is equivalent to the  uniform integrability of $\mathcal{S}$. Recall that $\mathcal{S} \subset L^1$ is uniformly integrable if 
\[
\forall \varepsilon >0 \,\, \exists \delta >0 \mbox{ such that } \mathbb{P}(A) < \delta \Rightarrow \sup_{\phi \in \mathcal{S}} \int_A \phi \diff\mathbb{P} < \varepsilon.
\]
Uniform integrability is characterized by de la Vall\'ee-Poussin's criterion: $\mathcal{S}$ is uniformly integrable if and only if there exists a function $\Phi\colon \mathbb{R}_+ \to \mathbb{R}$, increasing, convex, with $\Phi(0)=0$ and $\Phi(x)/x \to +\infty$ for $x \to +\infty$, such that 
\[
\sup_{\mathbb{Q} \in \mathcal{S}} \E\Big[\Phi \Big(\frac{\diff\mathbb{Q}}{\diff\mathbb{P}}\Big) \Big] < +\infty.
\]
In the law determined case, the Kusuoka representation becomes
\[
u(\xi)=\inf \left \{  \int \nu( \diff \alpha ) u_{\alpha} (\xi) \mid \nu \in \mathcal{S}   \right \},
\]
and the WC property is equivalent to $\nu(\{0 \}) =0$ for each $\nu \in \mathcal{S}$. \\
A (finite-valued) Young function is a convex function $\Phi:[0,\infty) \to [0,\infty)$ with $\Phi(0) = 0$ and $\lim_{x \to + \infty} \Phi(x) = + \infty$. A Young function is necessarily nondecreasing, continuous and strictly increasing on $\{ \Phi >0 \}$. The Orlicz space $L^{\Phi}$ is defined as
\[
L^{\Phi}:= \{X \;|\; \E[\Phi(c|X|)] < \infty \; \text{for some $c > 0$}\};
\]
the Orlicz heart is 
\[
H^{\Phi}:= \{X \;|\; \E[\Phi(c|X|)] < \infty \; \text{for every $c > 0$}\}.
\]
The Luxemburg norm is defined as
\[
\lVert X \rVert_{\Phi} := \inf \Big\{\lambda > 0 \;|\; \E[\Phi(|X/\lambda|)] \le 1\Big\}
\]
and makes both $L^{\Phi}$ and $H^{\Phi}$ Banach spaces. Finally, we say that $\Phi$ satisfies a $\Delta_2$ condition if $\Phi(2x) \leq k\Phi(x)$, for some $k>0$ and for $x \geq x_0$; in this case $H^{\Phi}$=$L^{\Phi}$. For Orlicz space theory and applications to risk measures we refer to \citet{RaoRen1991}, \citet{EdgarSucheston1992}, \citet{CheriditoLi2008}, \citet{CheriditoLi2009} and the references therein.

\subsection{Mixture continuity properties}

In this subsection we study mixture continuity properties of law determined concave monetary utilities. 
We say that a utility $u$ is mixture continuous if for each $F, G \in \mathcal{M}_{1,c}$, the function
\[
\lambda \mapsto u(\lambda F + (1-\lambda) G)
\]
is continuous on $[0,1]$. The next proposition shows that in the case $F=\delta_x$ and $G=\delta_y$ with $x<y$, continuity for $ \lambda \in (0,1]$ is always satisfied.
\begin{prop} \label{mc1}
Let $u \colon L^{\infty} \to \mathbb{R}$ be a concave law determined monetary utility, and let $x,y \in \R$, with $x <y$. Then the mapping 
\[
\lambda \mapsto u(\lambda \delta_{x}+(1-\lambda) \delta_{y})
\]
is continuous at each $\lambda \in (0,1]$. 
\end{prop}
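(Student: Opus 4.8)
The plan is to work with an explicit realization of the two-point mixtures and to exploit the Kusuoka representation \eqref{Kusuoka}, which is available since a finite-valued law determined concave monetary utility on $L^{\infty}$ automatically has the Fatou property (so that the dual and Kusuoka representations apply). Let $U$ be uniform on $[0,1]$, possible since the space is atomless, and set $\xi_\lambda := x\,\Ind_{\{U\le\lambda\}} + y\,\Ind_{\{U>\lambda\}}$, so that $\operatorname{Law}(\xi_\lambda)=F_\lambda:=\lambda\delta_x+(1-\lambda)\delta_y$. Writing $g(\lambda):=u(\xi_\lambda)$, monotonicity of $u$ together with $x<y$ shows that $\lambda\mapsto\xi_\lambda$ is pointwise nonincreasing, hence $g$ is nonincreasing on $[0,1]$ and admits one-sided limits with $g(\lambda_0^-)\ge g(\lambda_0)\ge g(\lambda_0^+)$ at every interior point. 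It therefore suffices to prove left continuity at each $\lambda_0\in(0,1]$ and right continuity at each $\lambda_0\in(0,1)$.

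The computational input is the value of the building blocks on two-point laws: a direct quantile computation gives $u_\alpha(F_\lambda)=y-(y-x)\min(\lambda/\alpha,1)$ for $\alpha\in(0,1]$ and $u_0(F_\lambda)=\essinf(\xi_\lambda)=x$ for $\lambda>0$, so that for every fixed $\lambda_0>0$ the map $\alpha\mapsto u_\alpha(F_{\lambda_0})$ is continuous on $[0,1]$ and, crucially, $u_\alpha(F_\lambda)\to u_\alpha(F_{\lambda_0})$ uniformly in $\alpha\in[0,1]$ as $\lambda\to\lambda_0$. Writing $G_\nu(\lambda):=\int u_\alpha(F_\lambda)\,\nu(\diff\alpha)$, each map $\lambda\mapsto G_\nu(\lambda)+\overline{c}(\nu)$ is then continuous on $(0,1]$, and $g=\inf_\nu\{G_\nu(\cdot)+\overline{c}(\nu)\}$ is an infimum of continuous functions, hence upper semicontinuous on $(0,1]$; combined with monotonicity this yields left continuity at every $\lambda_0\in(0,1]$. (Equivalently, fixing a near-minimizer $\nu_0$ of $g(\lambda_0)$ and letting $\lambda\uparrow\lambda_0$ gives $\limsup g(\lambda)\le G_{\nu_0}(\lambda_0)+\overline{c}(\nu_0)\le g(\lambda_0)+\delta$.)

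The substance is right continuity at $\lambda_0\in(0,1)$, i.e.\ ruling out a downward jump $g(\lambda_0^+)<g(\lambda_0)$. I would fix $\delta>0$, choose for each $\lambda>\lambda_0$ a near-minimizer $\nu_\lambda\in\mathcal{M}_1[0,1]$ with $G_{\nu_\lambda}(\lambda)+\overline{c}(\nu_\lambda)\le g(\lambda)+\delta$, and note that $u_\alpha(F_\lambda)\ge x$ forces $\overline{c}(\nu_\lambda)\le g(\lambda_0)+\delta-x=:M$. Since $\mathcal{M}_1[0,1]$ is weak$^\ast$ compact and $\overline{c}$ is lower semicontinuous, the sublevel set $\{\overline{c}\le M\}$ is compact, so along some $\lambda_k\downarrow\lambda_0$ with $g(\lambda_k)\to g(\lambda_0^+)$ I can extract $\nu_{\lambda_k}\to\nu_\ast$. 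The uniform convergence of $\alpha\mapsto u_\alpha(F_{\lambda_k})$ together with weak$^\ast$ convergence gives $G_{\nu_{\lambda_k}}(\lambda_k)\to G_{\nu_\ast}(\lambda_0)$, while lower semicontinuity gives $\liminf_k\overline{c}(\nu_{\lambda_k})\ge\overline{c}(\nu_\ast)$; hence
\[
g(\lambda_0^+)+\delta\ \ge\ \liminf_k\big[G_{\nu_{\lambda_k}}(\lambda_k)+\overline{c}(\nu_{\lambda_k})\big]\ \ge\ G_{\nu_\ast}(\lambda_0)+\overline{c}(\nu_\ast)\ \ge\ g(\lambda_0),
\]
and letting $\delta\to0$ closes the gap.

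The main obstacle is this last compactness step, and it is precisely here that the restriction to $\lambda_0>0$ enters: the uniform convergence of $\alpha\mapsto u_\alpha(F_\lambda)$ breaks down at $\lambda_0=0$, where the mass that $\nu_\ast$ may place at $\alpha=0$ (the $\essinf$ component) produces a genuine gap of size $(y-x)\,\nu_\ast(\{0\})$. This is exactly the phenomenon that the weak compactness condition $\nu(\{0\})=0$ controls, and it foreshadows why mixture continuity at $\lambda=0$ is the nontrivial robustness requirement singled out in Proposition \ref{WC_prop}.
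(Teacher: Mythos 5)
Your proof is correct, and while it starts from the same two ingredients as the paper --- the Kusuoka representation \eqref{Kusuoka} and the explicit dyadic formula $u_\alpha(F_\lambda)=y-(y-x)\min(\lambda/\alpha,1)$ --- the analytic mechanism for the hard step is genuinely different. The paper exploits that this formula is \emph{convex} and non-increasing in $\lambda$, so that every $g_\nu$ is convex and non-increasing, and then invokes the appendix Lemma \ref{cont} (an equi-Lipschitz estimate for infima of uniformly bounded convex functions) to get continuity on $(0,1)$, finishing at $\lambda=1$ with the Fatou property \eqref{Fatou} plus monotonicity. You never use convexity in $\lambda$: you obtain left continuity on $(0,1]$ from upper semicontinuity of the infimum plus monotonicity, and right continuity from a compactness argument --- weak$^\ast$ compactness of the sublevel sets $\{\overline{c}\le M\}$ in $\mathcal{M}_1[0,1]$, lower semicontinuity of $\overline{c}$, and your uniform-in-$\alpha$ convergence estimate (which correctly requires $\lambda_0>0$, and whose companion fact, continuity of $\alpha\mapsto u_\alpha(F_{\lambda_0})$ at $\alpha=0$, you verify; this is what makes the weak$^\ast$ limit pass through $G_{\nu_\ast}$). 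What each route buys: yours is a template that transfers to any representation with a lower semicontinuous penalty over a compact parameter set and also absorbs the endpoint $\lambda=1$ without a separate Fatou argument; the paper's is more elementary in that it needs nothing about $\overline{c}$ beyond its existence --- neither lower semicontinuity nor compactness of $\mathcal{M}_1[0,1]$ --- at the cost of the convexity observation and the appendix lemma. Your appeal to the automatic Fatou property for law determined concave monetary utilities on atomless spaces is consistent with the paper's own (implicit) use of \eqref{dual}, \eqref{Fatou} and \eqref{Kusuoka}, so nothing is missing there.

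One simplification within your own framework: your uniform convergence is in fact a uniform local Lipschitz bound, namely $\sup_{\alpha\in[0,1]}\vert u_\alpha(F_\lambda)-u_\alpha(F_{\lambda'})\vert\le (y-x)\vert\lambda-\lambda'\vert/\varepsilon$ for $\lambda,\lambda'\in[\varepsilon,1]$, since for $\alpha<\varepsilon$ both minima equal $1$ while for $\alpha\ge\varepsilon$ the map $t\mapsto\min(t/\alpha,1)$ is $1/\alpha$-Lipschitz. Hence every $G_\nu+\overline{c}(\nu)$ is $(y-x)/\varepsilon$-Lipschitz on $[\varepsilon,1]$, so their (finite) infimum $g$ is too, and continuity on $(0,1]$ follows in one stroke --- your entire compactness step for right continuity could be deleted, yielding an argument even shorter than the paper's.
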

\begin{proof} 
The $\TVaR$ at level $\alpha$ of the dyadic variables under consideration is given by
\[
u_{\alpha}(\lambda\delta_{x}+(1-\lambda) \delta_{y})=
\begin{cases}
 \frac{\lambda}{\alpha} x +  \frac{\alpha-\lambda}{\alpha} y & \mbox{if } 0 < \lambda < \alpha \\
x & \mbox{if } \alpha \leq \lambda \leq 1
\end{cases},
\]
and is a piecewise linear, non-increasing and convex function of $\lambda$. \\
It follows that for each $\nu \in \mathcal{M}_1[0,1]$, the function 
\[
g_{\nu}(\lambda):= \int u_{\alpha}(\lambda \delta_{x}+(1-\lambda) \delta_{y}) \nu (d\alpha)
\]
is also non-increasing and convex with $g_{\nu}(0)=y$ and $g_{\nu}(1)=x$. \\
By the Kusuoka representation \eqref{Kusuoka}, we have
\[
u(\lambda \delta_{x}+(1-\lambda) \delta_{y})=\inf  \left \{ g_{\nu}(\lambda) + \bar{c}(\nu) \mid \nu \in \mathcal{M}_{1}[0,1] \right \},
\]
and by monotonicity, the function $u(\lambda \delta_{x}+(1-\lambda) \delta_{y})$ is left continuous on $(0,1]$ (at least for $x<y$). Then Lemma \ref{cont} implies that $u(\lambda \delta_{x}+(1-\lambda) \delta_{y})$ is continuous in $\lambda$ for $\lambda \in (0,1)$. Finally, \eqref{Fatou} gives 
\[
\limsup_{\lambda \to 1^-} u(\lambda \delta_x + (1 - \lambda ) \delta_y ) \leq u(\delta_x),  
\] 
so from monotonicity it follows that
\[
\lim_{\lambda \to 1^-} u(\lambda \delta_x + (1 - \lambda ) \delta_y ) = u(\delta_x).
\] 
\end{proof}

The simplest example in which mixture continuity for $\lambda \to 0^+$ fails is the coherent utility $u(\xi)=\essinf(\xi)$. There are also many other examples, such as  $u(\xi)=\gamma \mathbb{E}[\xi] + (1-\gamma) \essinf(\xi)$, for $\gamma \in (0,1)$. In fact, we  prove in Proposition \ref{WC_prop} that for a law determined monetary concave utility the property of mixture continuity for $\lambda \to 0^+$ is equivalent to the WC property. We begin with a characterization of the essential infimum.

\begin{lem} \label{essinf_lemma}
Let $u \colon L^{\infty} \rightarrow \mathbb{R}$ be a concave law determined
monetary utility. If there exists $a > 0$ and $k_n \uparrow 0$ such that $ \forall \alpha \in (0,1)$ we have that 
\[ 
u(\alpha \delta_{k_n} +(1-\alpha) \delta_a)<0,
\]
then $u(\xi)=\essinf(\xi)$.
\end{lem}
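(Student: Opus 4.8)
The plan is to prove the single inequality $u(\xi)\le\essinf(\xi)$ for every $\xi\in L^{\infty}$; the reverse inequality $u(\xi)\ge u(\essinf(\xi))=\essinf(\xi)$ is automatic from monotonicity and translation invariance, so this suffices. First I would recast the hypothesis at the level of random variables. Fix $\alpha\in(0,1)$ and choose, by atomlessness, an event $A$ with $\mathbb{P}(A)=\alpha$; the hypothesis then reads $u(k_{n}\Ind_{A}+a\Ind_{A^{c}})<0$ for every $n$. Consider the function $t\mapsto u(t\Ind_{A}+a\Ind_{A^{c}})$ on $[k_{1},0]$. Being the composition of the affine map $t\mapsto t\Ind_{A}+a\Ind_{A^{c}}$ with the concave functional $u$, it is finite and concave, hence lower semicontinuous at the right endpoint $t=0$. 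Combining this with $k_{n}\uparrow 0$ yields
\[
u(a\Ind_{A^{c}})\le\liminf_{n\to\infty}u(k_{n}\Ind_{A}+a\Ind_{A^{c}})\le 0 .
\]

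The main obstacle is precisely this passage to the limit $k_{n}\uparrow 0$, and it is worth isolating why the naive tools fail. The Fatou property \eqref{Fatou} controls $u$ in the wrong direction, giving only an upper semicontinuous ($\ge$) estimate along $k_{n}\Ind_{A}+a\Ind_{A^{c}}\to a\Ind_{A^{c}}$, which is useless for an upper bound on $u(a\Ind_{A^{c}})$; the same is true of monotonicity. The resolution is that $0$ is the \emph{right} endpoint of the segment, and a finite concave function can only drop at an endpoint, never jump up, so it is automatically lower semicontinuous there. This is exactly the estimate needed. Since $\alpha\in(0,1)$ and $A$ were arbitrary, translation invariance applied to $a\Ind_{A^{c}}=a-a\Ind_{A}$ converts the bound into
\[
u(-a\Ind_{B})\le -a\qquad\text{for every }B\text{ with }\mathbb{P}(B)\in(0,1),
\]
and since $u(-a\Ind_{B})\ge\essinf(-a\Ind_{B})=-a$, this is an equality.

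Finally I would feed this equality back into the dual representation \eqref{dual}. Writing $u(-a\Ind_{B})=\inf_{\mathbb{Q}}\{c(\mathbb{Q})+a\,\mathbb{Q}(B^{c})\}-a$, the identity $u(-a\Ind_{B})=-a$ forces $\inf_{\mathbb{Q}}\{c(\mathbb{Q})+a\,\mathbb{Q}(B^{c})\}=0$, so for each such $B$ there is a sequence $\mathbb{Q}_{j}\ll\mathbb{P}$ with $c(\mathbb{Q}_{j})\to 0$ and $\mathbb{Q}_{j}(B^{c})\to 0$, i.e.\ $\mathbb{Q}_{j}(B)\to 1$. Now take an arbitrary $\xi$ and $\varepsilon>0$, and apply this to $B=\{\xi\le\essinf(\xi)+\varepsilon\}$; one has $\mathbb{P}(B)\in(0,1)$ unless $\xi$ already lies within $\varepsilon$ of its essential infimum, a case settled directly by monotonicity and translation invariance. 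Using $u(\xi)\le\E_{\mathbb{Q}_{j}}[\xi]+c(\mathbb{Q}_{j})$ together with $\E_{\mathbb{Q}_{j}}[\xi]\le(\essinf(\xi)+\varepsilon)\,\mathbb{Q}_{j}(B)+\lVert\xi\rVert_{\infty}\,\mathbb{Q}_{j}(B^{c})$, letting $j\to\infty$ gives $u(\xi)\le\essinf(\xi)+\varepsilon$, and then $\varepsilon\to 0$ yields $u(\xi)\le\essinf(\xi)$, completing the argument. This dual step is what lets one reach atoms of arbitrary height $b$ in a uniform way, bypassing any attempt to bootstrap the bound $u(-a\Ind_{B})=-a$ from the single scale $a$ to larger scales directly.
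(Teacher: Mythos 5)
Your proof is correct, and it reaches the paper's conclusion by a partially different route, so a comparison is in order. The destination of both arguments is the same: reduce to $u(\xi)\le\essinf(\xi)$, and prove it by producing, for $B=\{\xi\le\essinf(\xi)+\varepsilon\}$, scenarios $\mathbb{Q}$ with small penalty $c(\mathbb{Q})$ and small mass $\mathbb{Q}(B^c)$, then splitting $\E_{\mathbb{Q}}[\xi]$ over $B$ and $B^c$ — your final estimate is essentially identical to the paper's. The difference is in how the scenarios are produced. The paper never passes to the limit $k_n \uparrow 0$ at the primal level: for each fixed $n$ and each $B$ with $\p(B)>0$ it applies the dual representation \eqref{dual} directly to $u(k_n\Ind_B + a\Ind_{B^c})<0$ and reads off the quantitative bounds $c(\mathbb{Q})\le -k_n$ and $\mathbb{Q}(B^c)\le -k_n/a$, deferring all limits ($k_n\uparrow 0$, $\beta\to 0$) to the last line. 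You instead take the limit first, in the primal, via the observation that $t\mapsto u(t\Ind_A + a\Ind_{A^c})$ is finite and concave (in fact on all of $\R$, so continuous, and in particular lower semicontinuous at $t=0$, which is all you need), obtaining the clean intermediate identity $u(-a\Ind_B)=-a$ for every $B$ with $\p(B)\in(0,1)$, and only then dualizing to extract $\mathbb{Q}_j$ with $c(\mathbb{Q}_j)\to 0$ and $\mathbb{Q}_j(B^c)\to 0$. Your concavity step is sound, and your diagnosis that the Fatou property and monotonicity control only the useless direction at that endpoint is exactly right; what your route buys is a transparent structural reformulation of the hypothesis, while the paper's route buys economy, since keeping $k_n$ explicit makes the primal-limit step unnecessary. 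A small bonus of your version: because $\mathbb{Q}_j(B)\to 1$ is a genuine limit, no sign normalization is needed in the final estimate, whereas the paper tacitly reduces to $\xi\ge 0$ so that $(\essinf(\xi)+\beta)\mathbb{Q}(B)\le \essinf(\xi)+\beta$. Both arguments rest on the dual representation \eqref{dual}, hence on the Fatou property, which is automatic here since $u$ is concave, law determined and the space is atomless — the paper uses it in the same implicit way.
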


\begin{proof}
We  first show that for each $k_n$ and for all $B \in \mathcal{F}$ with $\mathbb{P}(B) > 0$ there exists a $\mathbb{Q} \in \mathbf{P}$ with
\begin{equation}\label{eq:cBc}
c(\mathbb{Q}) \le -k_n \quad \text{and} \quad \mathbb{Q}(B^c) \le \frac{-k_n}{a}.
\end{equation}
Let $B \in \mathcal{F}$ with $\mathbb{P}(B) > 0$. Then from the hypothesis
\[
u(k_n \Ind_{B}+a\Ind_{B^c}) <0
\]
and from the dual representation, there exists $\mathbb{Q} \in \mathbf{P}$ such that 
\[
k_n \mathbb{Q}(B)+a\mathbb{Q}(B^c)+c(\mathbb{Q}) <0,
\]
which yields 
\[
a\mathbb{Q}(B^c)+c(\mathbb{Q})<-k_n \mathbb{Q}(B) \leq -k_n,
\]
hence
\[
c(\mathbb{Q}) \leq -k_n \quad \text{and}\quad \mathbb{Q}(B^c) \leq \frac{-k_n}{a}.
\]
For any $\xi \in L^{\infty}$, $\xi \geq 0$, and $\beta >0$, let 
\[
B:=\{\xi \leq \essinf(\xi) +\beta  \}.
\]
Clearly, $\mathbb{P}(B) > 0$. Taking $\mathbb{Q}$ as in \eqref{eq:cBc}, we have that 
\begin{align*}
u(\xi) \leq E_{\mathbb{Q}}(\xi)+c(\mathbb{Q}) & \leq (\essinf(\xi) + \beta) \mathbb{Q}(B) + \Vert \xi \Vert_{\infty} \mathbb{Q}(B^{c}) + c(\mathbb{Q})\\
& \leq \essinf(\xi) + \beta + \frac{-k_n}{a} \Vert \xi \Vert_{\infty} -k_n.
\end{align*}
Since this inequality holds for each $\beta >0$,  letting $k_n \uparrow 0$ it follows that $u(\xi)=\essinf(\xi)$.
\end{proof}

\begin{lem} \label{WC_lemma}
Let $u \colon L^{\infty} \rightarrow \mathbb{R}$ be a concave law determined
monetary utility. If there exists $a > 0$, $k_n \to -\infty$ and $ \alpha_n \in (0,1)$ such that 
\[ 
u(\alpha_n \delta_{k_n} +(1-\alpha_n) \delta_a) \geq 0,
\]
then $u$ has the WC property.
\end{lem}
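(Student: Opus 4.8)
The plan is to prove the WC property directly in its uniform-integrability form. Recall that WC means every lower level set $\mathcal{S}_m = \{\mathbb{Q} \in \mathbf{P} \mid c(\mathbb{Q}) \le m\}$ of the penalty function is compact in $\sigma(L^1,L^\infty)$. Since $c$ is convex and $\sigma(L^1,L^\infty)$-lower semicontinuous, each $\mathcal{S}_m$ is weakly closed, so by the Dunford--Pettis theorem it suffices to show that each $\mathcal{S}_m$ is uniformly integrable. Thus the whole argument reduces to producing, for each $m$ and each $\varepsilon>0$, a $\delta>0$ with $\mathbb{P}(A)<\delta \Rightarrow \sup_{\mathbb{Q}\in\mathcal{S}_m}\mathbb{Q}(A)<\varepsilon$.

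First I would turn the hypothesis into a pointwise bound on penalties, mirroring the computation in Lemma \ref{essinf_lemma}. Since the space is atomless and $u$ is law determined, the assumption gives, for every $B\in\mathcal{F}$ with $\mathbb{P}(B)=\alpha_n$,
\[
u(k_n\Ind_B + a\Ind_{B^c}) = u(\alpha_n\delta_{k_n}+(1-\alpha_n)\delta_a) \ge 0.
\]
Applying the dual representation \eqref{dual} and using that a nonnegative infimum forces every summand to be nonnegative, I obtain for all such $B$ and all $\mathbb{Q}\in\mathbf{P}$ that $k_n\mathbb{Q}(B)+a\mathbb{Q}(B^c)+c(\mathbb{Q})\ge 0$. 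Writing $\mathbb{Q}(B^c)=1-\mathbb{Q}(B)$ and rearranging, and noting that $a-k_n>0$ for all $n$ large (as $k_n\to-\infty$), this yields the key estimate
\[
\mathbb{Q}(B) \le \frac{a+c(\mathbb{Q})}{a-k_n}.
\]

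The second use of atomlessness is an embedding step. Given any $A\in\mathcal{F}$ with $\mathbb{P}(A)\le\alpha_n$, enlarge it to some $B\supseteq A$ with $\mathbb{P}(B)=\alpha_n$; then monotonicity of $\mathbb{Q}$ and the bound above give $\mathbb{Q}(A)\le\mathbb{Q}(B)\le \frac{a+m}{a-k_n}$ uniformly over $\mathbb{Q}\in\mathcal{S}_m$. To finish, fix $m$ and $\varepsilon>0$; since $k_n\to-\infty$ I can choose $n$ so large that $\frac{a+m}{a-k_n}<\varepsilon$, and then set $\delta=\alpha_n>0$. Whenever $\mathbb{P}(A)<\delta=\alpha_n$, the preceding inequality yields $\sup_{\mathbb{Q}\in\mathcal{S}_m}\mathbb{Q}(A)<\varepsilon$, which is exactly the $\varepsilon$--$\delta$ criterion for uniform integrability recalled above.

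Each individual step is short, so the only genuine subtlety—and the point I would be most careful about—is the order of quantifiers in the two appeals to atomlessness. The hypothesis only controls $\mathbb{Q}(B)$ for sets $B$ of one fixed size $\alpha_n$; the crucial observation is that, via the embedding $A\subseteq B$ and monotonicity, such a bound at a single level $\alpha_n$ already controls $\mathbb{Q}(A)$ for \emph{all} smaller sets, with a constant $\frac{a+m}{a-k_n}$ that is uniform over the whole level set $\mathcal{S}_m$ and tends to $0$ as $n\to\infty$. Everything else (the translation to indicators through law determination, the algebraic rearrangement, and the passage from uniform integrability to weak compactness via Dunford--Pettis together with lower semicontinuity of $c$) is routine.
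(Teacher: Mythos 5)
Your proof is correct and takes essentially the same route as the paper: both reduce the WC property to uniform integrability of the level sets $\mathcal{S}_m$, use the dual representation to turn the hypothesis into the bound $\Q(A)\leq \frac{a+m}{a-k_n}$ (the paper's constant is $\frac{a+m}{-k_n}$, equally good), and let $k_n \to -\infty$. The only cosmetic differences are that the paper handles sets with $\p(A)\leq \alpha_n$ via first-order stochastic dominance and monotonicity of $u$, whereas you enlarge $A$ to a set of measure exactly $\alpha_n$ and use monotonicity of $\Q$, and that you spell out the closedness/Dunford--Pettis step that the paper leaves implicit.
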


\begin{proof}
We show that $\mathcal{S}_m=\left\{ \diff\Q/\diff \p \;|\;c(\Q)\leq m\right\} 
$ is uniformly integrable. Let $\p(A)\leq \alpha_n $. Then $\alpha_n \delta _{k_n}+(1-\alpha_n )\delta_a
\leq _{st} \p(A) \delta_{k_n}+ (1-\p(A)) \delta_{a}$, 
so 
\[
u(k_n\Ind_{A}+a\Ind_{A^{c}})\geq 0, 
\]
that gives for all $\Q$
\[
k_n\Q(A)+a\Q(A^{c})+c(\Q)\geq 0, 
\]
which implies 
\[
\Q(A)\leq \frac{a+c(\Q)}{-k_n} \leq \frac{a+m}{-k_n}.
\]
Letting $k_n \to -\infty$, we find that the set $\mathcal{S}_m$ is uniformly absolutely continuous with respect to $\p$ and hence uniformly integrable. 
\end{proof}

The preceding Lemmas show that there are three possible situations. Let $k<0$ and $a>0$. For the sake of brevity, say that condition C holds for $(k, a)$ if 
there exists $\alpha \in (0,1) \mbox { such that } u(\alpha \delta_k + (1-\alpha) \delta_a ) \geq 0$. 

\begin{prop} 
\label {trichotomy}
Let $u \colon L^{\infty} \rightarrow \mathbb{R}$ be a concave law determined monetary utility. Then there are the following alternatives: 
\begin{itemize}
\item[a)] $u(\xi)=\essinf(\xi)$, in which case condition C does not hold for any $(k,a)$, 
\item[b)] $u(\xi)$ has the WC property, in which case condition C holds for every $(k,a)$, 
\item[c)] $u(\xi) \neq \essinf(\xi)$ and does not have the WC property, in which case condition C holds only for some $(k,a)$.
\end{itemize}
\end{prop}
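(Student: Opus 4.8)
The plan is to read the three possibilities as statements about $u$ itself --- namely $u=\essinf$, $u$ has the WC property, or neither --- which are plainly exhaustive and, since $\essinf$ lacks the WC property (in its Kusuoka representation $\nu=\delta_0$, so $\nu(\{0\})=1$), mutually exclusive; I would then attach to each the asserted behaviour of condition C and let the forward implications plus exhaustiveness do the rest. The first case is immediate: if $u=\essinf$, then for every $k<0<a$ and every $\alpha\in(0,1)$ a variable with law $\alpha\delta_k+(1-\alpha)\delta_a$ carries mass $\alpha>0$ at $k$, whence $u(\alpha\delta_k+(1-\alpha)\delta_a)=k<0$, so condition C fails for all $(k,a)$.

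The substantive step is case (b). Fixing $k<0<a$, I would send $\alpha\to0^+$. Using that $(\Omega,\F,\p)$ is atomless, choose $A_n\in\F$ with $\p(A_n)\downarrow0$ and put $\xi_n:=k\Ind_{A_n}+a\Ind_{A_n^c}$, which has law $\p(A_n)\delta_k+(1-\p(A_n))\delta_a$. Then $\xi_n\to a$ in probability and $\sup_n\lVert\xi_n\rVert_\infty\le\max(|k|,a)$, so the Lebesgue property --- equivalent to WC --- yields $u(\xi_n)\to u(\delta_a)=a>0$. Hence $u(\alpha\delta_k+(1-\alpha)\delta_a)\ge0$ for $\alpha$ small, i.e.\ condition C holds for $(k,a)$, and as $(k,a)$ was arbitrary it holds everywhere. (Equivalently one may stay within the dual representation: writing the value as $\inf_\Q\{(k-a)\Q(A)+a+c(\Q)\}$ and taking the level $m=-k$, uniform integrability of $\mathcal{S}_m$ permits a choice of $\p(A)$ so small that $\Q(A)\le(a+c(\Q))/(a-k)$ for every $\Q$, which is exactly the required nonnegativity.)

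For case (c), assume $u\neq\essinf$ and $u$ lacks WC. With $a=1$ fixed and $k_n=-1/n\uparrow0$, were condition C to fail for every $(k_n,1)$ then Lemma \ref{essinf_lemma} would force $u=\essinf$, a contradiction; so condition C holds for some $(k,a)$. Dually, with $a=1$ and $k_n=-n\to-\infty$, were condition C to hold for every $(k_n,1)$ then Lemma \ref{WC_lemma} would give the WC property, again a contradiction; so condition C fails for some $(k,a)$. Thus in this case condition C holds for some but not all $(k,a)$, which completes the trichotomy. The only genuinely analytic input is case (b); cases (a) and (c) reduce to a one-line evaluation and to the contrapositives of the two preceding lemmas, so I expect (b) --- and specifically the passage to the limit $\alpha\to0^+$ --- to be the main obstacle.
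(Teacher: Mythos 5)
Your proposal is correct and follows essentially the same route as the paper: case (a) by direct evaluation, case (b) via the Lebesgue property (equivalent to WC) applied to $\xi_n = k\Ind_{A_n}+a\Ind_{A_n^c}$ with $\p(A_n)\downarrow 0$, and case (c) as the contrapositives of Lemmas \ref{essinf_lemma} and \ref{WC_lemma}, which is exactly how the paper's ``reverse implications'' and its terse ``c) follows immediately from a) and b)'' are meant. Your only deviations are harmless streamlinings: you skip the paper's intermediate-value step in (b) (condition C only requires $u\geq 0$, so $u(\xi_n)\to a>0$ already suffices) and you add a valid dual/uniform-integrability alternative for the same step.
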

\begin{proof}
a) If $u(\xi)=\essinf(\xi)$, then clearly condition C never holds. The reverse implication is a straightforward consequence of Lemma \ref{essinf_lemma}. \\
b) If $u(\xi)$ has the WC property, then the function 
\[
\lambda \mapsto u \left (\lambda \delta_{x}+(1-\lambda) \delta_{y} \right )
\]
is continuous also for $\lambda \to 0^+$, as a consequence of the Lebesgue property. Indeed, for any sequence $\lambda_n \to 0^+$, 
let $\xi_n \sim \lambda_n \delta_{x}+(1-\lambda_n) \delta_{y}$, $\xi = y$ $\p$-a.s. and $\xi_n \overset{\mathbb{P}}{\to} \xi$, with $\sup_n\Vert \xi _{n} \Vert_{\infty} \leq \vert y \vert$, so from \eqref{Lebesgue} we get
\[
u(\lambda_n \delta_{x}+(1-\lambda_n) \delta_{y}) \to u(\delta_{y}).
\]
Since $u(k)=k<0$ and $u(a)=a>0$, there exists $\alpha \in (0,1)$ such that $u(\alpha_{\delta_k} + (1-\alpha) \delta_a)=0$; hence condition C holds for every $(k,a)$. The reverse implication is a consequence of Lemma \ref{WC_lemma}. \\
c) follows immediately from a) and b).
\end{proof}
\begin{rem}
It is interesting the comparison with condition (3.1) in \citet{Weber2006}. In our notation, Weber's requirement is that there exists $a_0>0$ such that for each $k<0$, condition C holds for $(k,a_0)$. For concave law determined monetary utilities, Weber's condition (3.1) is satisfied if and only if $u(\xi)$ has the WC property. The if part follows from Proposition \ref{trichotomy} item b), while the only if part follows from Lemma \ref{WC_lemma}. 
\end{rem}

In Proposition \ref{mc1} we showed that any concave law determined monetary utility is mixture continuous on dyadic variables for $\lambda \in (0,1]$. 
From the trichotomy of Proposition \ref{trichotomy} it follows that the additional continuity for $\lambda \to 0^+$ is equivalent to the WC property and to the $\psi$-weak continuity, for some gauge function $\psi$. 

\begin{prop} \label{WC_prop}
Let $u \colon L^{\infty} \rightarrow \mathbb{R}$ be a concave law determined monetary utility. The following are equivalent: 
\begin{itemize}
\item[a)] $u$ has the WC property.
\item[b)] $u$ is $\psi$-weakly continuous for some gauge function $\psi$.
\item[c)] For each $x,y \in \R$ with $x<y$, the function $\lambda \mapsto u(\lambda \delta_{x}+(1-\lambda) \delta_{y})$ is continuous for $\lambda \to 0^+$.
\end{itemize}
\end{prop}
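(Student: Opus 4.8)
The plan is to establish the cycle (a) $\Rightarrow$ (b) $\Rightarrow$ (c) $\Rightarrow$ (a). Two of the three arrows are short: (b) $\Rightarrow$ (c) is a direct computation and (c) $\Rightarrow$ (a) follows by contraposition from the trichotomy of Proposition \ref{trichotomy}. The substance of the statement is the remaining arrow (a) $\Rightarrow$ (b), namely the construction of a gauge function out of the WC property.

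For (b) $\Rightarrow$ (c), fix $x<y$ and put $F_\lambda := \lambda\delta_x+(1-\lambda)\delta_y$. As $\lambda\to0^+$ one has $F_\lambda\to\delta_y$ weakly, and since $\psi$ is finite and continuous, $\int\psi\,\diff F_\lambda=\lambda\psi(x)+(1-\lambda)\psi(y)\to\psi(y)=\int\psi\,\diff\delta_y$. Hence $F_\lambda\overset{\psi}{\to}\delta_y$, and $\psi$-weak continuity gives $u(F_\lambda)\to u(\delta_y)$, which is exactly continuity at $\lambda\to0^+$. For (c) $\Rightarrow$ (a) I argue by contraposition. If (a) fails then $u$ falls into case a) or case c) of Proposition \ref{trichotomy}. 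In case a), $u=\essinf$, so $u(F_\lambda)=x$ for every $\lambda\in(0,1]$ while $u(\delta_y)=y>x$, and continuity at $0^+$ fails. In case c) there is a pair $(k_0,a_0)$ with $k_0<0<a_0$ for which condition C fails, i.e. $u(\alpha\delta_{k_0}+(1-\alpha)\delta_{a_0})<0$ for all $\alpha\in(0,1)$; since this function is non-increasing in $\alpha$ (more weight on the smaller atom $k_0$ gives a stochastically smaller law, so $u$ decreases by monotonicity), its supremum over $\alpha\in(0,1)$, which equals its limit as $\alpha\to0^+$, is at most $0<a_0=u(\delta_{a_0})$, and (c) fails again. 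Thus (c) implies (a).

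The core of the argument is (a) $\Rightarrow$ (b). Assuming the WC property, the level set $\sS_1=\{\Q\in\mathbf{P}\mid c(\Q)\le1\}$ is uniformly integrable, so de la Vallée-Poussin supplies a Young function $\Phi$ with $\sup_{\Q\in\sS_1}\E[\Phi(\diff\Q/\diff\p)]<\infty$. I would take as gauge a suitable normalization of the conjugate Young function, $\psi(x):=1+\Phi^*(|x|)$ (finite-valued since $\Phi$ is superlinear), the point being that for $\xi$ with $\int\psi\,\diff F<\infty$ the Fenchel–Young inequality $\phi\,\xi\ge -\Phi(\phi)-\Phi^*(\xi^-)$ bounds $\E_\Q[\xi]$ from below uniformly over $\sS_1$, so that only densities in a controlled Orlicz ball are effective in the dual representation \eqref{dual}. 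To prove $u(F_n)\to u(F)$ when $F_n\overset{\psi}{\to}F$, I would split into $\limsup_n u(F_n)\le u(F)$ and $\liminf_n u(F_n)\ge u(F)$: the first follows from the upper semicontinuity built into the infimum-of-affine form of \eqref{dual} together with the weak convergence of quantiles, and the second from a lower-truncation argument, replacing $F_n$ by the law of $\xi_n\vee(-M)$, where the uniform bound $\int\psi\,\diff F_n\le C$ and the uniform integrability of the level sets are used to make the truncation error $u(F_n)-u(F_n^{(-M)})$ small uniformly in $n$ as $M\to\infty$, after which the Lebesgue property \eqref{Lebesgue} handles each fixed truncation level.

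The main obstacle is precisely this uniformity. The density control produced by de la Vallée-Poussin is a priori attached to the single level set $\sS_1$, whereas the infimum in \eqref{dual} for an unbounded, merely $\psi$-integrable $\xi$ may recruit densities of arbitrarily high penalty; equivalently, in the Kusuoka form \eqref{Kusuoka} the weighting measures satisfy $\nu(\{0\})=0$ but may still concentrate near $0$, where the spectra $u_\alpha$ blow up. Making the truncation error uniform therefore requires matching the growth of $\psi$ to the penalty function rather than choosing an arbitrary superlinear gauge, and this coupling — tuning $\Phi^*$ to the Orlicz space on which $u$ extends continuously — is the delicate step that the choice $\psi=1+\Phi^*(|\cdot|)$ is designed to resolve.
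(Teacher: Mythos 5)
Your arrows (b) $\Rightarrow$ (c) and (c) $\Rightarrow$ (a) are fine and essentially coincide with the paper's proof (the paper invokes Proposition \ref{mc1} to pass to the limit $\lambda \to 0^+$ where you invoke monotonicity in $\alpha$; both work). The genuine gap is in (a) $\Rightarrow$ (b), and it is exactly the one you name in your closing paragraph without resolving. Your Young function $\Phi$ comes from de la Vall\'ee-Poussin applied to the single level set $\sS_1=\{\Q \mid c(\Q)\le 1\}$, and your Fenchel--Young bound controls $\E_{\Q}[\xi]$ only for $\Q\in\sS_1$. This does not show that ``only densities in a controlled Orlicz ball are effective'' in \eqref{dual}: since the supports of $F_n$ need not be uniformly bounded when $F_n\overset{\psi}{\to}F$, the measures competing in the infimum for $u(F_n)$ may have penalty of order $\Vert\xi_n\Vert_\infty\to\infty$, and the WC property gives uniform integrability of each $\sS_m$ separately, with no uniformity in $m$. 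The paper closes precisely this gap with a coupling construction: it forms the normalized set $\sS_0=\{(\diff\Q/\diff\p)/(1+c(\Q)) \mid \Q\in\mathbf{P}\}$, shows this \emph{single} set is relatively $\sigma(L^1,L^\infty)$-compact, takes its solid closed convex hull, applies de la Vall\'ee-Poussin to it, and then uses the equivalence $\Vert\xi\Vert_u\le 1 \iff u(-\vert\xi\vert)\ge -1$ to tie the resulting Orlicz control to the whole penalty function, concluding that $u$ is finite on the Orlicz heart $H^\Psi$. (Your starting point could in fact be repaired: by convexity of $c$ and $c(\p)=0$, the measure $\Q':=\tfrac{1}{1+c(\Q)}\Q+\tfrac{c(\Q)}{1+c(\Q)}\p$ lies in $\sS_1$ and its density dominates $(\diff\Q/\diff\p)/(1+c(\Q))$, so control of $\sS_1$ does propagate to all normalized densities --- but some such argument is indispensable and your proposal does not contain it.)

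There is a second, independent problem: even granting the correct Orlicz bound, the gauge $\psi=1+\Phi^{*}(\vert\cdot\vert)$ is too weak when the conjugate $\Psi=\Phi^{*}$ fails the $\Delta_2$ condition, which cannot be excluded ($\Phi\in\Delta_2$ does not imply $\Psi\in\Delta_2$). From $F_n\overset{\psi}{\to}F$ one gets, via Skorohod and Scheff\'e, uniform integrability of $\Psi(\vert\xi_n\vert)$; but without $\Delta_2$ this controls neither $\Psi(2\vert\xi_n\vert)$ nor the Luxemburg norm, whereas the conclusion requires $\Vert\xi_n-\xi\Vert_\Psi\to 0$ in order to use H\"older's inequality $\vert u(\xi_n)-u(\xi)\vert \le 2\sup_{\Q}\Vert\diff\Q/\diff\p\Vert_\Phi\,\Vert\xi_n-\xi\Vert_\Psi$. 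The paper circumvents this by taking the stronger gauge $\psi(x)=\sum_{k\ge 1}\lambda_k\Psi(k\vert x\vert)$ with $\lambda_k=2^{-k}/\Psi(k^2)$, whose $\psi$-weak convergence forces $\E[\Psi(k\vert\xi_n-\xi\vert)]\to 0$ for every $k$ and hence norm convergence. Your limsup/liminf truncation scheme would have to reproduce both of these devices; as written, the central direction of the proposition remains unproved.
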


\begin{proof}
First we show that a) $\Rightarrow$ b). Let us first consider the case in which $u$ is coherent. Then, from the WC property it follows that the set of generalized scenarios $\mathcal{S}$ in the dual representation
\[ 
u(\xi)=\inf\big\{ \E_{\mathbb{Q}}[\xi] \mid \Q \in \mathcal{S} \big\}
\]
is compact in the $\sigma(L^1, L^{\infty})$ topology, and hence uniformly integrable. By the de la Vall\'ee-Poussin's criterion, there exists a Young function $\Phi$ with $\frac{\Phi(x)}{x} \to +\infty$ for $x \to +\infty$ such that 
\[
\sup_{\Q \in \mathcal{S}} \E \left [\Phi \left (\frac{\diff \Q}{\diff \p} \right ) \right ]   < +\infty, 
\]
so $\mathcal{S}$ is bounded in the Orlicz space $L^{\Phi}$. Denoting with $\Psi$ the convex conjugate of $\Phi$, from the results of \citet{CheriditoLi2008} and \citet{CheriditoLi2009} it follows that the coherent utility $u$ is finite-valued on the Orlicz heart $H^{\Psi}$. If $\Psi$ satisfies the $\Delta_2$ condition, then from the results of \citet{KratschmerETAL2013} it follows that $u$ is $\psi$-weakly continuous for $\psi(x):=\Psi(\vert x \vert)$, and hence we immediately have b). If instead $\Psi$ does not satisfy the $\Delta_2$ condition, then we consider the gauge function
\[
\psi(x)= \sum_{k=1}^{+\infty} \lambda_k \Psi(k \vert x \vert ), \mbox { where } \lambda_k=\frac{1}{2^k \Psi (k^2)}.
\]
Let $\xi_n \overset{\psi}{\to} \xi$. From Skorohod's representation, it is possible to assume that
\begin{equation}
\label {conv}
\xi_n \to \xi \mbox{ a.s. and }\;  \E[\psi(\vert \xi_n \vert )] \to \E[\psi(\vert \xi \vert )].
\end{equation}
By the continuity of $\psi$, it follows that $\psi(\vert \xi_n \vert) \to \psi(\vert \xi \vert)$ a.s., and since $\psi(\vert \xi_n \vert ) \geq 0$ a.s. and $\E[\psi(\vert \xi_n \vert )] \to \E[\psi(\vert \xi \vert )]$, from Scheff\'e's lemma it follows that $\psi(\vert \xi_n \vert) \overset{L^1} {\to} \psi(\vert \xi \vert)$, so in particular the family $\psi(\vert \xi_n \vert)$ is uniformly integrable. Hence for all $\varepsilon >0$ there exists $C > 0$ such that for all $n$
\[
\varepsilon \geq \E [ \psi( \vert \xi_n \vert ) \Ind_{ \{ \psi( \vert \xi_n \vert ) \geq C \}} ],
\]
and since $\psi(x) \geq \lambda_k \Psi (k \vert x \vert)$ we have that
\[
\varepsilon \geq \lambda_k \E [ \Psi ( k \vert \xi_n \vert ) \Ind_{ \{ \psi ( \vert \xi_n \vert ) \geq C \} } ] \geq \lambda_k \E [ \Psi ( k \vert \xi_n \vert ) \Ind_{ \{ \Psi (k \vert \xi_n \vert ) \geq C / \lambda_k \} } ], 
\]
which yields the uniform integrability of the family
$\Psi (k \vert \xi_n \vert)$. The family $\Psi (k \vert \xi_n - \xi \vert)$ is also uniformly integrable since $\Psi (\vert x-y \vert ) \leq \Psi (2 \vert x \vert) + \Psi (2 \vert y \vert ) $.
Hence, under (\ref{conv}), it holds that 
\[
\E [\Psi (k \vert \xi_n - \xi \vert)] \to 0, \mbox{ for each } k>0,
\]
which in turn implies $\Vert \xi_n -\xi \Vert_{\Psi} \to 0$ (see for example Proposition 2.1.10 in \cite{EdgarSucheston1992}). The thesis follows then from the Young inequality since
\[
\vert u(\xi_n) - u(\xi) \vert \leq \sup_{\Q \in \mathcal{S}} \E_{\Q} [ \vert \xi_n - \xi \vert ] \leq 2 \sup_{\Q \in \mathcal{S}} \left \Vert \frac{\diff \Q}{\diff \p} \right \Vert_{\Phi} \cdot  \Vert \xi_n - \xi \Vert_{\Psi} \to 0.
\] 
Let us now consider the more general case in which $u$ has a dual representation 
\[
u(\xi)=\inf\big\{ \E_{\mathbb{Q}}[\xi] + c(\Q)\; |\;  \Q \in \mathbf{P} \big\}.
\]
From the WC property the sets $\mathcal{S}_k:=\{\Q \in \mathbf{P}\;|\;c(\Q) \leq k\}$ are compact in the $\sigma(L^1,L^{\infty})$ topology, for each $k \geq 0$. 
Let
\[ 
\mathcal{S}_0:=\left \{ \frac{d\Q}{d\p}\frac{1}{1+c(\Q)}~|~ \Q \in \mathbf{P} \right \}.
\]
$\mathcal{S}_0$ is relatively sequentially compact in the $\sigma(L^1, L^{\infty})$ topology, since for any sequence $\Q_n \in \mathcal{S}_0$ there are two alternatives: either $\Q_n$ definitely belongs to some $\mathcal{S}_k$, or for some subsequence $c(\Q_{n_j}) \to +\infty$. In both cases, the sequence $\Q_n$ has a $\sigma(L^1, L^{\infty})$ convergent subsequence. Denoting with $\mathcal{S}$ the solid closed convex hull of $\mathcal{S}_0$, it follows that $\mathcal{S}$ is compact in the $\sigma(L^1, L^{\infty})$ topology and hence uniformly integrable. Let
\[
 E_\mathcal{S}:=\{f\in L^1~|~\exists\,\varepsilon>0, ~\text{s.t. } \varepsilon f\in \sS\}.
\]
$E_\mathcal{S}$ is a Banach space with the Luxemburg norm 
\[
 \|f\|_{\sS}:=\inf\Big\{ \lambda\;\big|\;\frac{|f|}{\lambda}\in \sS\Big\}.
\]
Since $\mathcal{S}$ is uniformly integrable, from the de la Vall\'ee-Poussin's criterion there exists a Young function $\Phi$ with $\frac{\Phi(x)}{x} \to +\infty$ as $x\to +\infty$ such that
\[ 
\sup_{f \in \sS} \E \left [\Phi \left ( |f| \right ) \right ] < +\infty.
\]
Hence
\[
 E_{\sS}\subseteq L^\Phi\subset L^1,
\]
where we can assume $\Phi \in \Delta_2$, so $L^\Phi=H^\Phi$. Passing to the duals we get
\[
 L^\infty \subset L^\Psi \subseteq (E_{\sS})^{\ast},
\]
where $\Psi$ is the convex conjugate of $\Phi$. The dual norm on $(E_{\sS})^{\ast}$ is given by
\[
\Vert \xi \Vert_u =\sup_{\Q \in \p} \E \left [ \frac{\diff \Q}{\diff P} \frac{1}{1+c(\Q)} \vert \xi \vert  \right ],
\]
and since $\Vert \xi \Vert_u \leq 1 \iff u(-\vert \xi \vert) \geq -1$, it follows that $u$ is finite on $(E_{\sS})^{\ast}$ and hence also on the Orlicz heart $H^\Psi \subseteq L^\Psi $. The proof then proceed as in the coherent case. \newline   
To prove that b) $\Rightarrow$ c), let $u$ be $\tilde{\Psi}$-weak continuous and let $x,y \in \R$ with $x<y$. Then
\[
\lambda \delta_x +(1-\lambda) \delta_y \overset{\tilde{\Psi}}{\to} \delta_y \mbox{ for } \lambda \to 0^+,
\]
so from $\tilde{\Psi}$-weak continuity it follows that
\[
u(\lambda \delta_x +(1-\lambda) \delta_y) \to u(\delta_y).
\]
To prove that c) $\Rightarrow$ a), we assume by contradiction that $u$ does not have the WC property. From Proposition \ref{trichotomy} we know that the condition C fails for some $(\bar{k},\bar{a})$, with $\bar{k} <0$ and $\bar{a}>0$. Since from Proposition \ref{mc1} the mapping $\lambda \mapsto u(\lambda \delta_{\bar{k}} + (1-\lambda) \delta_{\bar{a}}))$ is continuous for $\lambda \in (0,1]$, it must hold that 
\[
\lim_{\lambda \to 0^+} u(\lambda \delta_{\bar{k}} + (1-\lambda) \delta_{\bar{a}})) < 0 <\bar{a},
\]
giving a contradiction with c).
\end{proof}

We notice that as a consequence of the Fatou property, it always holds that $F_n \in \mathcal{N}$, $supp (F_n) \subseteq K$ for some compact $K$ and $F_n \to F$ weakly implies that $F \in \mathcal{N}$. From the preceding theorem it follows that if $u$ has the WC property, then the acceptance set $\mathcal{N}$ is $\psi$-weakly closed for some gauge function $\psi$. 
\section{Monetary concave utility functions with CxLS}

From now on we assume that the monetary concave law determined utility $u\colon L^{\infty} \to \mathbb{R}$ has the property of convex level sets at the level of distributions (CxLS), that is
\[
u(F)=u(G)=\gamma \Rightarrow u(\lambda F + (1-\lambda) G) = \gamma, \mbox{ for each } \lambda \in (0,1).
\]
We recall that $\mathcal{N}=\{ \operatorname{Law}(\xi) \mid u(\xi) \geq 0 \}$ is the acceptance set of $u$ at the level of distributions. We have the following:
\begin{lem}\label{intermezzo}
Let $u$ be a monetary concave law determined utility function with CxLS. Then: 
\begin{itemize}
\item[a)] $\mathcal{N}$ and $\mathcal{N}^c$ are convex with respect to mixtures.
\item[b)] Let $u(\xi) \neq \essinf(\xi)$. Then there exists a $k_0 < 0$ such that for each $a>0$ there exists $\alpha \in (0,1)$ such that $u(\alpha \delta_{k_0} +(1-\alpha)  \delta_a) \geq 0$.
\end{itemize}
\end{lem}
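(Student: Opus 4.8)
The plan is to verify the two claims—$\mathcal{N}$ convex with respect to mixtures and $\mathcal{N}^c$ convex with respect to mixtures—directly from the CxLS property together with monotonicity. For $\mathcal{N}$: take $F,G \in \mathcal{N}$, so $u(F) \geq 0$ and $u(G) \geq 0$. The subtlety is that CxLS as stated applies when $u(F)=u(G)=\gamma$ are \emph{equal}, whereas here we only know both are nonnegative. The trick I would use is translation invariance: since $u(F - u(F)) = 0$ and $u(G - u(G)) = 0$ (interpreting the shift as pushing the distribution), the translated distributions $F_0 := F(\cdot + u(F))$ and $G_0 := G(\cdot + u(G))$ both lie on the level set $\{u = 0\}$. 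By CxLS their mixture $\lambda F_0 + (1-\lambda)G_0$ also has $u = 0$. Then a monotonicity/stochastic-dominance comparison between $\lambda F + (1-\lambda)G$ and a suitably shifted version of the mixture should give $u(\lambda F + (1-\lambda)G) \geq 0$, so the mixture lies in $\mathcal{N}$. For $\mathcal{N}^c$ the same strategy applies with strict inequalities. I expect the bookkeeping of the translation shifts to be the one place that requires care, but it is routine.

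\textbf{Plan for part b).}
Here the hypothesis is $u(\xi) \neq \essinf(\xi)$, and I want a \emph{single} $k_0 < 0$ that works uniformly for every $a > 0$. By Proposition~\ref{trichotomy}, failing the essential-infimum case (a) means condition C holds for \emph{some} pair $(\bar k, \bar a)$, i.e.\ there is $\alpha \in (0,1)$ with $u(\alpha \delta_{\bar k} + (1-\alpha)\delta_{\bar a}) \geq 0$. This gives one point in $\mathcal{N}$. The plan is to set $k_0 := \bar k$ and use the mixture-convexity of $\mathcal{N}$ from part~a) to propagate acceptability from the fixed value $\bar a$ to an arbitrary $a > 0$.

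The key step is a comparison-plus-mixture argument. Fix an arbitrary $a > 0$. If $a \geq \bar a$, then $\alpha \delta_{k_0} + (1-\alpha)\delta_a \geq_{st} \alpha \delta_{k_0} + (1-\alpha)\delta_{\bar a}$, so by monotonicity $u(\alpha \delta_{k_0} + (1-\alpha)\delta_a) \geq u(\alpha \delta_{k_0} + (1-\alpha)\delta_{\bar a}) \geq 0$, and we are done with the same $\alpha$. The harder case is $0 < a < \bar a$. Here I would exploit that $\delta_a \in \mathcal{N}$ (indeed $u(\delta_a) = a > 0$) and that the point mass at $k_0$ combined with $\bar a$ is acceptable; mixing these two acceptable laws using part~a) yields acceptability of distributions supported on $\{k_0, a, \bar a\}$, and a stochastic-dominance bound lets me push the mass off $\bar a$ down onto $a$ while keeping $u \geq 0$. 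Concretely, one wants to find a mixing weight so that the resulting two-point law $\beta \delta_{k_0} + (1-\beta)\delta_a$ dominates (in $\leq_{st}$) an acceptable three-point law, giving some $\alpha = \beta \in (0,1)$ as required.

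\textbf{Main obstacle.}
The delicate point will be part~b) in the regime $0 < a < \bar a$: producing the correct weight and verifying the stochastic-dominance inequality so that the two-point mixture at $(k_0, a)$ inherits acceptability, \emph{uniformly} in $a$, from the single fixed pair $(k_0, \bar a)$. Getting a genuinely $a$-independent $k_0$—rather than a $k_0$ that drifts with $a$—is the crux, and it is exactly where the mixture-convexity of $\mathcal{N}$ (part~a) does the essential work, converting a single instance of condition C into a statement holding for all $a > 0$.
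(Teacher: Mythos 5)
Your part a) is essentially the paper's own argument: use translation invariance to bring the two laws onto a common level set, apply CxLS to the mixture there, and conclude with first-order stochastic dominance. One slip to correct: for $\mathcal{N}^c$, shifting both laws up to the level $0$ only yields $u(\lambda F+(1-\lambda)G)\leq 0$, which does not exclude the mixture from $\mathcal{N}$. You must instead shift both laws to the common level $\gamma:=\max\big(u(F),u(G)\big)<0$ (equivalently, shift the lower one up, which is what the paper does); then CxLS places the mixture exactly at level $\gamma<0$ and monotonicity gives the needed strict inequality. This is a fixable bookkeeping point, as you anticipated.

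Part b) contains a genuine gap in the case $0<a<\bar a$. Writing $\mu\in(0,1)$ for your mixing weight, part a) gives acceptability of the three-point law $\nu=\mu\alpha\,\delta_{k_0}+(1-\mu)\,\delta_a+\mu(1-\alpha)\,\delta_{\bar a}$, and you then want some $\beta\in(0,1)$ with $\beta\delta_{k_0}+(1-\beta)\delta_a\geq_{st}\nu$. This is impossible: for any threshold $t\in(a,\bar a)$, the law $\nu$ puts mass $\mu(1-\alpha)>0$ above $t$, while every law supported on $\{k_0,a\}$ puts mass $0$ above $t$, so the survival-function inequality defining $\geq_{st}$ fails. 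First-order dominance can never move mass \emph{downward} from $\bar a$ onto $a$. The paper's resolution is to mix with $\delta_0$ rather than with $\delta_a$, and to replace $\leq_{st}$ by the \emph{concave} order $\leq_{cv}$ (this is precisely where concavity of $u$ enters, via its isotonicity with respect to $\leq_{cv}$). Concretely, by part a),
\[
\nu_\lambda := \lambda\alpha\,\delta_{k_0}+(1-\lambda)\,\delta_0+\lambda(1-\alpha)\,\delta_{\bar a}\in\mathcal{N}
\quad\text{for every }\lambda\in(0,1),
\]
and the choice $\lambda=a/\big((1-\alpha)\bar a+\alpha a\big)$ makes the barycenter of the part of $\nu_\lambda$ carried by $\{0,\bar a\}$ equal exactly to $a$; Jensen's inequality then gives $\nu_\lambda\leq_{cv}\lambda\alpha\,\delta_{k_0}+(1-\lambda\alpha)\,\delta_a$, so this two-point law is acceptable and $\alpha(k_0,a)$-type statement follows with $\alpha=\lambda\alpha\in(0,1)$. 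Note that your choice of $\delta_a$ cannot be repaired by merely switching to $\leq_{cv}$: the barycenter of the positive part of your $\nu$ always exceeds $a$, and the required comparison $\beta\delta_{k_0}+(1-\beta)\delta_a\geq_{cv}\nu$ forces $\alpha(a-k_0)>(1-\alpha)(\bar a-a)$, which fails for small $\alpha$ or large $\bar a$. So the two essential missing ideas are the auxiliary law $\delta_0$ (whose presence below $a$ allows a mean-preserving contraction landing exactly at $a$) and the use of second-order, not first-order, stochastic dominance.
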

\begin{proof}
a) Let $F, G \in \mathcal{N}$ and let $\xi, \eta$ such that $\operatorname{Law}(\xi)=F$ and $\operatorname{Law}(\eta)=G$. Take $A \in \mathcal{F}$ with $\p[A]=\alpha$ and assume the variables $\xi, \eta$ and the set $A$ to be independent. The existence of $\xi$, $\eta$ and $A$ is guaranteed by Lemma \ref{nonatomic}. Without loss of generality, assume that $u(\xi)=u(\eta) +\beta$, with $\beta \geq 0$, and let $\xi ' :=\xi-\beta.$ From translation invariance $u(\xi ')=u(\xi) -\beta = u(\eta)$. Let $F ' =\operatorname{Law}( \xi ')$; then  $\xi \Ind_A + \eta \Ind_{A^c}$ has law $\alpha \F + (1-\alpha) G$ and $\xi ' \Ind_A + \eta \Ind_{A^c}$ has law $\alpha F' + (1-\alpha) G$. Since $\xi \Ind_A + \eta \Ind_{A^c} = \xi ' \Ind_A + \eta \Ind_{A_c} + \beta \Ind_A \geq \xi ' \Ind_A + \eta \Ind_{A_c}$, from monotonicity and CxLS it follows that $u(\xi \Ind_A + \eta \Ind{A^c}) \geq u(\xi ' \Ind_A + \eta \Ind_{A^c})=u(\eta) \geq 0$, that gives the convexity of $\mathcal{N}$ with respect to mixtures. A similar argument applies to $\mathcal{N}
^c$. \\
b) From Proposition \ref{trichotomy}, it follows that there exists $k_0 <0$, $a_0>0$ and $\alpha_0 \in (0,1)$ such that 
$u(\alpha_0 \delta_{k_0} + (1-\alpha_0) \delta_{a_0} ) \geq 0$. We have to prove that for each $a >0$, there exists a suitable $\alpha \in (0,1)$ such that $u(\alpha \delta_{k_0} + (1-\alpha) \delta a) \geq 0$. If $a \geq a_0$, then by monotonicity $\alpha = \alpha_0$ satisfies the thesis. Let then $0 < a < a_0$. Note first that since
$u(\alpha_0 \delta_{k_0} + (1-\alpha_0) \delta_{a_0}) \geq 0$ and $u(0)=0$, from CxLS and a) it follows that for each $\lambda \in (0,1)$ 
\begin{equation*}
u(\lambda \alpha_0 \delta_{k_0} + (1-\lambda) \delta_0 + \lambda (1-\alpha_0) \delta_{a_0} ) \geq 0.
\end{equation*}
By choosing 
\[
\lambda =\frac{a}{(1-\alpha_0)a_0 + \alpha_0 a},
\]
it follows that 
\[
\lambda \alpha_0 \delta_{k_0} + (1-\lambda) \delta_0 + \lambda (1-\alpha_0) \delta_{a_0} \leq_{cv} \lambda \alpha_0 \delta_{k_0} + (1-\lambda \alpha_0) \delta_a,
\]
which implies that $u(\lambda \alpha_0 \delta_{k_0} + (1-\lambda \alpha_0) \delta_a) \geq 0$.
\end{proof}
\begin{rem}
Without the hypothesis of CxLS item b) is false, as can be seen by considering $u(\xi)=\frac{3}{4} \essinf (\xi)  + \frac{1}{4} \E[\xi]$ and $k_0=-a$.
\end{rem}
Lemma \ref{intermezzo} shows that when $u(\xi) \neq \essinf(\xi)$, the quantity
\begin{equation} \label{kappa}
K:=\inf \left\{ k <0  \mid \forall a>0, \, \exists\, \alpha \in (0,1)%
\mbox{ with }u( \alpha \delta _{k}+(1- \alpha )\delta _{a})\geq 0\right\}%
\end{equation}
is well defined, and Proposition \ref{trichotomy} shows that $K=-\infty$ if and only if $u$ has the WC property. 
\begin{lem}
\label{max_lemma}
Let $K$ as in \eqref{kappa}, and let $k \in (K,0)$ and $a>0$. Then 
\[
C(k,a):=\left\{ \alpha \mid u(\alpha \delta _{k}+(1-\alpha )\delta _{a})\geq
0\right\}
\]
is a closed interval. Moreover, letting
\begin{equation}
\label{alpha}
\alpha (k,a):=\max C(k,a),
\end{equation}
it holds that $\alpha(k,a)$ is nondecreasing with respect to $k$ and $a$, and 
\[
u(\alpha (k,a)\delta _{k}+(1-\alpha (k,a))\delta _{a})=0.
\]
\end{lem}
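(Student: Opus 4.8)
The plan is to study the single–variable function $\phi(\alpha):=u(\alpha\delta_k+(1-\alpha)\delta_a)$ on $[0,1]$ and to read off every assertion of the lemma from its monotonicity and continuity. First I would record the two boundary values: by the monetary property $\phi(0)=u(\delta_a)=a>0$ and $\phi(1)=u(\delta_k)=k<0$. Next, for $\alpha_1<\alpha_2$ one has $\alpha_2\delta_k+(1-\alpha_2)\delta_a\leq_{st}\alpha_1\delta_k+(1-\alpha_1)\delta_a$, since increasing the weight on the smaller point $k$ only raises the distribution function on $[k,a)$; monotonicity of $u$ with respect to $\leq_{st}$ then shows that $\phi$ is nonincreasing. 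Finally, since $k<a$, Proposition \ref{mc1} guarantees that $\phi$ is continuous at every $\alpha\in(0,1]$. I would not expect continuity at $\alpha=0$ in general — that is exactly the failure of the WC property — but $\phi(0)=a>0$ is all I will need at that endpoint.

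With these three facts in hand the structure of $C(k,a)$ is immediate. Because $k\in(K,0)$ and, by the same $\leq_{st}$ comparison, the set defining $K$ in \eqref{kappa} is an up-set in $k$, the level $k$ itself satisfies the defining condition: for the given $a$ there is some $\bar\alpha\in(0,1)$ with $\phi(\bar\alpha)\geq0$, so $C(k,a)$ is nonempty and meets $(0,1)$. Continuity on $(0,1]$ makes $\{\alpha\in(0,1]:\phi(\alpha)\geq0\}$ closed, and monotonicity of $\phi$ makes it an interval; together with $\phi(0)=a>0$ this yields $C(k,a)=[0,\alpha^*]$ with $\alpha^*:=\max C(k,a)$ attained. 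Since $\phi(1)=k<0$ we have $\alpha^*\in(0,1)$, a point of continuity of $\phi$; if $\phi(\alpha^*)>0$ then $\phi$ would remain positive slightly beyond $\alpha^*$, contradicting maximality, so $\phi(\alpha(k,a))=0$.

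For the monotonicity of $\alpha(k,a)$ I would again use first–order stochastic dominance, now to compare two problems. Fixing $a$ and taking $k_1<k_2$ gives $\alpha\delta_{k_1}+(1-\alpha)\delta_a\leq_{st}\alpha\delta_{k_2}+(1-\alpha)\delta_a$ for every $\alpha$, hence $u(\alpha\delta_{k_1}+(1-\alpha)\delta_a)\leq u(\alpha\delta_{k_2}+(1-\alpha)\delta_a)$. Evaluating at $\alpha=\alpha(k_1,a)$, where the left-hand side vanishes, shows $\alpha(k_1,a)\in C(k_2,a)$ and therefore $\alpha(k_1,a)\leq\alpha(k_2,a)$. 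The identical argument with $k$ fixed and $a_1<a_2$, using $\alpha\delta_k+(1-\alpha)\delta_{a_1}\leq_{st}\alpha\delta_k+(1-\alpha)\delta_{a_2}$, gives monotonicity in $a$.

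The only genuinely delicate point is the behaviour at the endpoint $\alpha=0$: without the WC property $\phi$ may jump down there, so one cannot invoke continuity on the closed interval $[0,1]$. The argument sidesteps this by using Proposition \ref{mc1} only on $(0,1]$ and relying on the strict inequality $\phi(0)=a>0$ to place $0$ inside $C(k,a)$; verifying that $C(k,a)$ actually reaches into $(0,1)$ — so that $\alpha^*>0$ and the root of $\phi$ exists — is precisely where the hypothesis $k\in(K,0)$ is consumed.
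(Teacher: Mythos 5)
Your proof is correct and follows essentially the same route as the paper's: continuity on $(0,1]$ from Proposition \ref{mc1} plus monotonicity of $u$ with respect to $\leq_{st}$ give the closed-interval structure, the vanishing of $u$ at $\alpha(k,a)$, and (via $C(k,a)\subseteq C(k',a')$ for $k\leq k'$, $a\leq a'$) the monotonicity of $\alpha(k,a)$. Your up-set argument for the set defining $K$ in \eqref{kappa} is a welcome elaboration of the step the paper dispatches with ``by the assumption $k\in(K,0)$ it follows that $0<\alpha(k,a)<1$.''
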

\begin{proof}
The first part of the thesis and the last equality follow from Proposition \ref{mc1}. 
By the assumption $k \in (K,0)$ it follows that $0 < \alpha (k,a) <1$. From the monotonicity of $u$ we have
\[ 
k \leq k', \, a \leq a' \Rightarrow C(k,a) \subseteq C(k ', a'),
\] 
which yields the monotonicity of $\alpha(k,a)$. 
\end{proof}

We now parallel the construction of \cite{Weber2006}, including also the case $K>-\infty$. 
We begin by defining $\varphi \colon (K, +\infty) \to \R$ as in \cite{Weber2006}.
We set $\varphi(0)=0$. 
For $k\in (K,0)$, we define $\varphi (k)$ implicitly by means of 
\[ 
\varphi (k)\alpha (k,1)+(1-\alpha (k,1))=0,
\]
hence
\begin{equation} 
\label {phi1}
\varphi (k)=-\frac{1-\alpha (k,1)}{\alpha (k,1)}=1-\frac{1}{\alpha (k,1)}<0,
\end{equation}
which is nondecreasing in $k$, by Lemma \ref{max_lemma}. For $a>0$, we fix a reference point $k_0 \in (K,0)$ and define $\varphi (a)$ implicitly by means of 
\[
\varphi (k_{0})\alpha (k_{0},a)+\varphi (a)(1-\alpha (k_{0},a))=0,
\]
hence
\begin{equation}
\label {phi2}
\varphi (a)=-\frac{\varphi (k_{0})\alpha (k_{0},a)}{1-\alpha (k_{0},a)}%
=\varphi (k_{0})\left[ 1+\frac{1}{\alpha (k_{0},a)-1}\right] >0,
\end{equation}
which is also nondecreasing in $a$, since $\varphi(k_0) <0$. It can be easily checked that $\varphi (1)=1$, independently on the choice of the reference point $k_0$. 
Thus the affine functional $L_{\varphi} \colon \mathcal{M}_{1,c} \to \R$ given by
\[
L_{\varphi}(\mu)= \int \varphi \diff \mu
\]
with $\varphi$ defined in \eqref{phi1} and \eqref{phi2} vanishes whenever $\mu=\alpha(k_0,a) \delta_{k_0} + (1-\alpha(k_0,a)) \delta_a $ or $\mu=\alpha(k,1) \delta_{k} + (1-\alpha(k,1)) \delta_1$. 
In the following lemma we prove that $L_{\varphi}$ vanishes also on all dyadic variables of the more general form $\mu=\alpha(k,a) \delta_{k} + (1-\alpha(k,a)) \delta_a $, with $\alpha(k,a)$ defined in \eqref{alpha}.
\begin{lem}
\label{Delta_lemma}
Let $K$ be as in \eqref{kappa}, $\alpha(k,a)$ as in \eqref{alpha} and $\varphi$ as in \eqref{phi1} and \eqref{phi2}. Let  $k \in (K,0)$ and $a>0$. Then
\[
\alpha (k,a) \varphi (k) + (1-\alpha (k,a)) \varphi (a)=0. 
\]
\end{lem}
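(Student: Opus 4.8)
\emph{Plan.} The identity to be proved is exactly the statement that the affine functional $L_{\varphi}$ annihilates the boundary measure
\[
P_1:=\alpha(k,a)\,\delta_k+(1-\alpha(k,a))\,\delta_a,
\]
and I would deduce this from the fact that $L_{\varphi}$ already vanishes on three other boundary measures. First dispose of the degenerate cases: if $a=1$ the claim is precisely the defining relation \eqref{phi1} for $\varphi(k)$, and if $k=k_0$ it is the defining relation \eqref{phi2} for $\varphi(a)$. So I may assume $k\neq k_0$ and $a\neq 1$, whence the four reals $k,k_0,a,1$ are distinct. Alongside $P_1$ introduce
\[
P_2:=\alpha(k,1)\delta_k+(1-\alpha(k,1))\delta_1,\qquad
P_3:=\alpha(k_0,a)\delta_{k_0}+(1-\alpha(k_0,a))\delta_a,
\]
\[
P_4:=\alpha(k_0,1)\delta_{k_0}+(1-\alpha(k_0,1))\delta_1.
\]
By Lemma \ref{max_lemma} each $P_i$ satisfies $u(P_i)=0$, and by construction (see \eqref{phi1}, \eqref{phi2} and the discussion preceding this lemma) we already know $L_{\varphi}(P_2)=L_{\varphi}(P_3)=L_{\varphi}(P_4)=0$. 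All four measures are supported on $\{k,k_0,a,1\}$, so I would carry out the argument inside the three-dimensional simplex $\Delta$ of probability measures on this set.

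The geometric core is to show that $P_1,P_2,P_3,P_4$ lie on a common affine hyperplane of $\Delta$. Here I would invoke Lemma \ref{intermezzo} a): both $\mathcal{N}$ and $\mathcal{N}^c$ are convex with respect to mixtures, so $\mathcal{N}\cap\Delta$ and $\mathcal{N}^c\cap\Delta$ are two disjoint nonempty convex subsets covering $\Delta$ (indeed $\delta_a\in\mathcal{N}$ since $u(\delta_a)=a>0$, and $\delta_k\in\mathcal{N}^c$ since $u(\delta_k)=k<0$). In finite dimension two disjoint nonempty convex sets are separated by a hyperplane $H=\{\ell=c\}$, say with $\mathcal{N}\cap\Delta\subseteq\{\ell\ge c\}$ and $\mathcal{N}^c\cap\Delta\subseteq\{\ell\le c\}$. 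Each $P_i$ lies in $\mathcal{N}$, so $\ell(P_i)\ge c$; conversely, since $\alpha(k,a)=\max C(k,a)<1$ by Lemma \ref{max_lemma}, pushing the weight slightly past $\alpha(k,a)$ produces measures of the form $\alpha\delta_k+(1-\alpha)\delta_a$ with $u<0$ that converge to $P_1$, and the same holds for $P_2,P_3,P_4$ along their defining segments (using the mixture continuity of Proposition \ref{mc1}). Continuity of the linear functional $\ell$ then gives $\ell(P_i)\le c$, whence $\ell(P_i)=c$ and all four points lie on $H$.

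It remains to convert ``four coplanar points'' into the desired linear relation. I would verify that $P_2,P_3,P_4$ are affinely independent: in the coordinates given by the masses on $(\delta_k,\delta_{k_0},\delta_a,\delta_1)$, only $P_2$ charges $\delta_k$ and only $P_3$ charges $\delta_a$, so a purported affine relation among them collapses immediately. Since $H$ is two-dimensional, the four points $P_1,P_2,P_3,P_4\in H$ must be affinely dependent; the independence of $P_2,P_3,P_4$ forces the coefficient of $P_1$ in any nontrivial dependence to be nonzero, so $P_1$ is an affine combination of $P_2,P_3,P_4$. Applying the affine functional $L_{\varphi}$, which vanishes at $P_2,P_3,P_4$, then yields $L_{\varphi}(P_1)=0$, that is $\alpha(k,a)\varphi(k)+(1-\alpha(k,a))\varphi(a)=0$.

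The step I expect to be the main obstacle is the passage ``$\mathcal{N}$ and $\mathcal{N}^c$ both convex $\Rightarrow$ the zero level set is flat'': one must be sure that the separating hyperplane genuinely captures each $P_i$ as a boundary point of $\mathcal{N}$ rather than an interior point, which is exactly why the maximality $\alpha(k,a)=\max C(k,a)$ and the one-sided continuity of Proposition \ref{mc1} are indispensable. The affine-independence bookkeeping of the last paragraph is routine once coplanarity is established.
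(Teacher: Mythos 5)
Your proposal is correct and takes essentially the same approach as the paper: the same four boundary measures (the paper's $x^1,\dots,x^4$ are your $P_4,P_3,P_2,P_1$ in its simplex coordinates), the same separation of $\mathcal{N}$ and $\mathcal{N}^c$ restricted to the finite-dimensional simplex, the same observation that all four points lie on the separating hyperplane because each is acceptable yet a limit of non-acceptable mixtures, and the same conclusion via expressing the target measure as an affine combination of the three on which $L_{\varphi}$ vanishes. The only cosmetic difference is that you check affine independence of $P_2,P_3,P_4$ directly in measure coordinates, whereas the paper checks linear independence of the corresponding coordinate points and separately rules out the separation level $s=0$; both serve the same purpose.
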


\begin{proof}
If $k=k_0$ or if $a=1$ the thesis is immediate from \eqref{phi1} and \eqref{phi2}, so we assume that $k \neq k_0$ and $a \neq 1$. 
Let 
\[
\Delta =\left\{ ( \lambda _{1},\lambda _{2},\lambda _{3}) \in \R^3 \mid \lambda _{i}\geq 0
\mbox{ and }\sum_{i=1}^3 \lambda _{i}\leq 1\right\},
\]
and let $\Phi \colon\Delta \rightarrow \mathcal{M}_{1,c}(\R)$ be defined by 
\[
\Phi ( \lambda _{1},\lambda _{2},\lambda _{3})=(1-\sum_i \lambda _{i})\delta _{k_{0}}+\lambda
_{1}\delta _{1}+\lambda _{2}\delta _{a}+\lambda _{3}\delta _{k}. 
\]
$\Phi$ is an affine bijective mapping of $\Delta$ onto the finite dimensional face of $\mathcal{M}_{1,c}$ given by the measures with support in $k_0$, $k$, $1$, $a$.
Let 
\begin{align*}
\mathcal{D} &:=\left\{ ( \lambda _{1},\lambda _{2},\lambda _{3}) \in \Delta \mid \Phi ( \lambda _{1},\lambda _{2},\lambda _{3})\in \mathcal{N} \right\} \\
\mathcal{C} &:=\left\{( \lambda _{1},\lambda _{2},\lambda _{3}) \in \Delta \mid \Phi( \lambda _{1},\lambda _{2},\lambda _{3}) \in \mathcal{N}^{c}\right\}.
\end{align*}
From Lemma \ref{intermezzo} it follows that $\mathcal{D}$ and $\mathcal{C}$ are
convex, $\mathcal{D}$ is closed and $\mathcal{C}$ is
relatively open in $\Delta$. We consider the following points in $\Delta$:
\begin{align*}
x^{1}&:=(1-\alpha (k_{0},1),0,0) \\
x^{2}&:=(0,1-\alpha (k_{0},a),0) \\
x^{3}&:=(1-\alpha (k,1),0,\alpha (k,1)) \\
x^{4}&:=(0,1-\alpha (k,a),\alpha (k,a)).
\end{align*}
By \eqref{alpha}, it follows that $x^{i}\in \mathcal{D}$ and each $x^{i}$ is in the relative
closure of $\mathcal{C}$ with respect to $\Delta$. Our aim is to show that $x^{4}$ is an affine combination of $x^1, x^2, x^3$. Since by definition 
$\int \varphi \diff \, \Phi (x^i) =0$ for $i=1, \dots, 3$, this would imply that $\int \varphi \, \diff \Phi (x^4) =0$, which is the thesis. 
In order to show that $x^{4}$ is an affine combination of $x^1, x^2, x^3$, we apply the separation theorem to the convex and disjoint sets $\mathcal{D}$ and $\mathcal{C}$.  
There exists a nontrivial linear $f:\mathbb{R}^{3}\rightarrow \mathbb{R}$ and $s\in \mathbb{R}$ such that $f(x)\leq s$ on $\mathcal{C}$ and $f(x)\geq s$ on $\mathcal{D}$.
Since $(0,0,0) \in \mathcal{C}$ and $f(0,0,0)=0$, it follows that $s\geq 0$. If $s=0$, we would get $f(x^i)=0$ for $i=1, \dots, 3$ since $x^i$ is in the relative closure of $C$, which in turn would imply that $f=0$, which is a contradiction; hence $s>0$. Thus the points $x^1, x^2, x^3, x^4$ lie on the nontrivial hyperplane $f(x)=s$; from the linear independence of $x^1, x^2,x^3$ it follows that $x^4$ is an affine combination of  $x^1, x^2,x^3$.
\end{proof}
\begin{cor}
Let $K$ be as in \eqref{kappa} and $\varphi$ as in \eqref{phi1} and \eqref{phi2}. Let $k, k_0 \in (K,0)$ and $a>0$. If $\xi$ is supported by $k, k_0, 1, a$, then $u(\xi) < 0$ if and only if $\E[\varphi(\xi)] < 0$.
\end{cor}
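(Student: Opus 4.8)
The plan is to reuse, essentially verbatim, the affine geometry set up in the proof of Lemma \ref{Delta_lemma}, so I would first reduce the statement to that picture. Writing $\operatorname{Law}(\xi)=\Phi(x)$ for the unique $x\in\Delta$, law determination gives $u(\xi)=u(\Phi(x))$, while $\E[\varphi(\xi)]=\int\varphi\diff\Phi(x)=(L_\varphi\circ\Phi)(x)$; hence the assertion is exactly $u(\Phi(x))<0\iff(L_\varphi\circ\Phi)(x)<0$. I may assume $k\neq k_0$ and $a\neq 1$, the remaining degenerate cases being lower-dimensional and handled by the same (simpler) argument directly from \eqref{phi1}--\eqref{phi2}. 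Thus the whole proof becomes a comparison, inside $\Delta$, of the set $\mathcal{C}=\{x\mid\Phi(x)\in\mathcal{N}^c\}$ with the zero set of the affine functional $L_\varphi\circ\Phi$.

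Next I would pin down that zero set. The functional $L_\varphi\circ\Phi$ is affine on $\mathbb{R}^3$ and nonconstant, since its linear part has the coefficient $\varphi(1)-\varphi(k_0)=1-\varphi(k_0)>0$; so its zero set $H$ is a genuine $2$-dimensional hyperplane. By Lemma \ref{Delta_lemma} together with \eqref{phi1} and \eqref{phi2}, the four points $x^1,x^2,x^3,x^4$ all satisfy $L_\varphi(\Phi(x^i))=0$ and hence lie on $H$; in that proof the three points $x^1,x^2,x^3$ were also shown to be linearly independent and to lie on the separating hyperplane $\{f=s\}$. Two hyperplanes of $\mathbb{R}^3$ sharing three affinely independent points must coincide, so $H=\{f=s\}$ and therefore $f-s=c\,(L_\varphi\circ\Phi)$ for some $c\neq0$. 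Evaluating at $(0,0,0)$, where $L_\varphi(\Phi(0,0,0))=\varphi(k_0)<0$ and $f(0,0,0)-s=-s<0$, forces $c>0$; consequently $f<s$, $f=s$, $f>s$ correspond precisely to $L_\varphi\circ\Phi<0$, $=0$, $>0$.

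One implication is then immediate from the separation alone. Since $\Delta=\mathcal{D}\sqcup\mathcal{C}$ and the separation yields $f\ge s$ on $\mathcal{D}$ and $f\le s$ on $\mathcal{C}$, any $x\in\Delta$ with $f(x)<s$ cannot lie in $\mathcal{D}$, hence lies in $\mathcal{C}$; by the proportionality this gives $(L_\varphi\circ\Phi)(x)<0\Rightarrow u(\Phi(x))<0$, the ``if'' direction. The delicate point is the converse, because on the boundary hyperplane $H$ the separating inequalities are not strict, and there the sign of $L_\varphi$ alone does not locate $x$ in $\mathcal{D}$ or $\mathcal{C}$.

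The main obstacle is therefore to show $H\cap\Delta\subseteq\mathcal{D}$, i.e.\ that $u(\Phi(x))\ge0$ whenever $L_\varphi(\Phi(x))=0$. I would resolve this using the closedness of $\mathcal{D}$ in $\Delta$ (recalled in the proof of Lemma \ref{Delta_lemma}) together with the fact that $\{f>s\}\cap\Delta\neq\emptyset$: indeed $(1,0,0)\in\Delta$ maps to $\delta_1$, for which $L_\varphi(\delta_1)=\varphi(1)=1>0$, so $f(1,0,0)>s$ and $(1,0,0)\in\mathcal{D}$. Given any $x\in\{f\ge s\}\cap\Delta$, the segment $x_t:=(1-t)x+t(1,0,0)$ lies in $\Delta$ and satisfies $f(x_t)=s+t\big(f(1,0,0)-s\big)+(1-t)\big(f(x)-s\big)>s$ for $t\in(0,1]$, hence $x_t\in\mathcal{D}$; letting $t\to0^+$ and using that $\mathcal{D}$ is closed gives $x\in\mathcal{D}$. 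Thus $\{f\ge s\}\cap\Delta\subseteq\mathcal{D}$, and in particular $H\cap\Delta\subseteq\mathcal{D}$. With this the ``only if'' direction closes: if $u(\Phi(x))<0$ then $x\in\mathcal{C}$, so $f(x)\le s$ and $(L_\varphi\circ\Phi)(x)\le0$; the value $0$ is impossible, since $x\in H\cap\Delta\subseteq\mathcal{D}$ would give $u(\Phi(x))\ge0$. Hence $(L_\varphi\circ\Phi)(x)<0$, completing the equivalence.
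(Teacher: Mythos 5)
Your proof is correct, and it is essentially the argument the paper intends: the corollary carries no proof precisely because it is meant to be read off from the separation setup in the proof of Lemma \ref{Delta_lemma}, which is exactly what you do. Where you go beyond verbatim reuse, the additions are sound and in fact fill real gaps in the paper's exposition: the identification $\{f=s\}=\{L_\varphi\circ\Phi=0\}$ via the three linearly (hence affinely) independent points $x^1,x^2,x^3$, the proportionality $f-s=c\,(L_\varphi\circ\Phi)$ with $c>0$ fixed by evaluating at the origin (using $s>0$ from the paper's proof), and above all the segment-to-$(1,0,0)$ argument combined with the closedness of $\mathcal{D}$, which yields $\{f\ge s\}\cap\Delta\subseteq\mathcal{D}$. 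This last step is precisely the strictness of the separation on $\mathcal{C}$ that the paper asserts without justification when, in the proof of Lemma \ref{Delta_lemma_2}, it posits a functional $g$ with $g<0$ on $\mathcal{C}$ and $g\ge 0$ on $\mathcal{D}$; your route via the three explicit points also replaces the extremal-point analysis of $G\cap\Delta$ used there. One small caveat: your dismissal of the degenerate cases $k=k_0$ or $a=1$ is slightly glib, since for the corollary (unlike for Lemma \ref{Delta_lemma} itself) they are not immediate from \eqref{phi1} and \eqref{phi2} but require rerunning the same separation argument in a lower-dimensional simplex; the argument does go through unchanged there, so this is a matter of wording, not a gap.
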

\begin{lem}
\label{Delta_lemma_2}
Let $K$ be as in \eqref{kappa} and $\varphi$ as in \eqref{phi1} and \eqref{phi2}. Let $\xi$ be supported by finitely many points, all greater than $K$. Then $u(\xi)<0$ if and only if $E[\varphi(\xi)] < 0$. 
\end{lem}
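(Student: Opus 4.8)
The plan is to extend the separation argument used for the four--point face in Lemma~\ref{Delta_lemma} to an arbitrary finite face of $\mathcal{M}_{1,c}$.

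First I would dispose of the trivial configurations. If all the (finitely many) support points lie in $(K,0)$, then $\operatorname{Law}(\xi)\leq_{st}\delta_p$ for the largest support point $p<0$, so $u(\xi)\leq p<0$ by monotonicity, while $\E[\varphi(\xi)]<0$ because $\varphi<0$ on $(K,0)$ by \eqref{phi1}; symmetrically, if all support points are $\geq 0$ both quantities are $\geq 0$. Hence I may assume the support contains at least one $q\in(K,0)$ and at least one $a>0$. Let $\Sigma$ be the finite--dimensional simplex of probability measures carried by the support points and put $\mathcal{D}:=\mathcal{N}\cap\Sigma$, $\mathcal{C}:=\mathcal{N}^{c}\cap\Sigma$. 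By Lemma~\ref{intermezzo}a) these are convex and they partition $\Sigma$; the Fatou property \eqref{Fatou} makes $u$ upper semicontinuous on the face, so $\mathcal{D}$ is closed and $\mathcal{C}$ relatively open, and both are nonempty.

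Next I would separate. Since $\mathcal{C}$ and $\mathcal{D}$ are disjoint convex sets in the affine hull of $\Sigma$ with $\mathcal{C}$ relatively open, there are a nontrivial affine $f$ and $s\in\R$ with $f<s$ on $\mathcal{C}$ and $f\geq s$ on $\mathcal{D}$. For each negative support point $q_i\in(K,0)$ and positive support point $a_j>0$, set $\nu_{ij}:=\alpha(q_i,a_j)\delta_{q_i}+(1-\alpha(q_i,a_j))\delta_{a_j}$ with $\alpha(q_i,a_j)\in(0,1)$ as in \eqref{alpha}. By Lemma~\ref{max_lemma} one has $u(\nu_{ij})=0$, so $\nu_{ij}\in\mathcal{D}$; and since $\alpha(q_i,a_j)=\max C(q_i,a_j)$, measures obtained by slightly raising the weight at $q_i$ lie in $\mathcal{C}$ and converge to $\nu_{ij}$, so $\nu_{ij}\in\overline{\mathcal{C}}$ and therefore $f(\nu_{ij})=s$. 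On the other hand Lemma~\ref{Delta_lemma} gives $L_{\varphi}(\nu_{ij})=0$.

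The heart of the proof, and the step I expect to be the main obstacle, is to show that $\{f=s\}$ and $\{L_{\varphi}=0\}$ cut out the same hyperplane of the affine hull of $\Sigma$, i.e. that the family $\{\nu_{ij}\}$ affinely spans it. Writing a generic affine functional as $\mu\mapsto\sum_p\mu(p)g_p$, the conditions $g(\nu_{ij})=0$ read $\alpha(q_i,a_j)g_{q_i}+(1-\alpha(q_i,a_j))g_{a_j}=0$; comparing with the defining relations \eqref{phi1}--\eqref{phi2}, which state exactly $\alpha(q_i,a_j)\varphi(q_i)+(1-\alpha(q_i,a_j))\varphi(a_j)=0$, yields $g_{q_i}/\varphi(q_i)=g_{a_j}/\varphi(a_j)$ for every pair $(i,j)$ (all denominators nonzero since $q_i<0<a_j$). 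As the complete bipartite system links every negative to every positive support point, all these ratios coincide with one constant $c$, so $g_p=c\,\varphi(p)$ throughout and the solution space is one--dimensional, spanned by $L_{\varphi}$. Hence $f-s$ is a nonzero multiple of $L_{\varphi}$, and evaluating at $\delta_q$ (where $f-s<0$ since $\delta_q\in\mathcal{C}$, while $\varphi(q)<0$) shows the multiple is positive. Therefore $\mathcal{C}=\{L_{\varphi}<0\}\cap\Sigma$, which is precisely $u(\xi)<0\iff\E[\varphi(\xi)]<0$. The single delicate point is a support atom at $0$: there $\varphi(0)=0$ makes the corresponding dyadic degenerate and leaves $g_0$ unconstrained, and I would treat it by first excluding $0$ from the support and then recovering the general case through a limiting argument in the support points, which forces one to control the monotone one--sided limits of $\varphi$ at $0$.
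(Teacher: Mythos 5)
Your overall strategy is the paper's: restrict to the finite-dimensional simplex spanned by the support points, separate the convex sets $\mathcal{D}=\mathcal{N}\cap\Sigma$ and $\mathcal{C}=\mathcal{N}^c\cap\Sigma$ (convexity from Lemma \ref{intermezzo}), observe that the dyadic measures $\nu_{ij}$ lie in $\mathcal{D}\cap\overline{\mathcal{C}}$ and are zeros of $L_{\varphi}$ by Lemma \ref{Delta_lemma}, and conclude that the separating hyperplane coincides with $\{L_{\varphi}=0\}$. The one genuine difference is the final identification step: the paper shows that every extremal point of $G\cap\Delta$ (where $G$ is the zero set of the separating functional) is either $\delta_0$ or a dyadic $\alpha(k,a)\delta_k+(1-\alpha(k,a))\delta_a$, hence a zero of $L_{\varphi}$, and then concludes $G=F$ by a dimension count; you go in the opposite direction, solving the system $\alpha_{ij}g_{q_i}+(1-\alpha_{ij})g_{a_j}=0$ over the complete bipartite family of dyadics to show that the space of affine functionals vanishing on all $\nu_{ij}$ is one-dimensional. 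That step of yours is clean and correct, and arguably more transparent than the extremal-point analysis.

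The loose end is exactly the point you flag: an atom at $0$. With a vertex at $0$ your linear system leaves $g_0$ unconstrained, the solution space becomes two-dimensional, and proportionality $f-s=c\,L_{\varphi}$ no longer follows. Your proposed repair (remove $0$ from the support and pass to the limit) is the weak spot: to control $\E[\varphi(\xi_\varepsilon)]$ you need $\varphi(\varepsilon)\to 0$ as $\varepsilon\downarrow 0$, i.e.\ right continuity of $\varphi$ at $0$, which at this stage is not yet available --- in the paper it is established only in Lemma \ref{concavity_lemma}, whose concavity part in turn cites the present lemma (the right-continuity part is an independent Fatou argument, but you would have to import it explicitly to avoid the appearance of circularity). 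There is a one-line fix inside your own framework, and it is essentially how the paper absorbs $\delta_0$ into its extremal-point list: $\delta_0$ itself lies in $\mathcal{D}\cap\overline{\mathcal{C}}$, since $u(\delta_0)=0$ while $u\leq \E_{\p}$ (Lemma \ref{conditioning}) gives $u\bigl(\varepsilon\delta_q+(1-\varepsilon)\delta_0\bigr)\leq \varepsilon q<0$ for any negative support point $q$, so these measures belong to $\mathcal{C}$ and converge to $\delta_0$. Hence $f(\delta_0)=s$, which adds the equation $g_0=0=\varphi(0)$ to your system and restores one-dimensionality. The same inequality $u\leq\E_{\p}$ also closes a small hole in your case analysis: a support contained in $(K,0]$ that contains both $0$ and a negative point falls under none of your cases (no strictly positive support point, not all points negative, not all nonnegative), but there $u(\xi)\leq\E_{\p}[\xi]<0$ and $\E[\varphi(\xi)]<0$ hold directly.
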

\begin{proof}
Let $k_0, k_1, \dots, k_N < 0 \leq a_1, \dots, a_M$ be distinct points. Similar to the proof of Lemma \ref{Delta_lemma}, we define
\begin{align*}
&\Delta =\{ (\lambda_0, \dots, \lambda_N, \gamma_1, \dots, \gamma_M) \mid \sum \lambda_i + \sum \gamma_j =1, \lambda_i \geq 0, \gamma_j \geq 0 \}, \\
&\Phi(\underline{\lambda}, \underline{\gamma}) =\sum_{i=0}^N \lambda_i \delta_{k_i} + \sum_{j=1}^M \gamma_j \delta_{a_j},
\mathcal{D} = \{ (\underline{\lambda}, \underline{\gamma}) \in \Delta \mid \Phi(\underline{\lambda}, \underline{\gamma}) \in \mathcal{N} \}, \\
& \mathcal{C} = \{ (\underline{\lambda}, \underline{\gamma}) \in \Delta \mid \Phi(\underline{\lambda}, \underline{\gamma}) \in \mathcal{N}^c \}.
\end{align*}
As in the proof of Lemma \ref{Delta_lemma}, $\mathcal{D}$ and $\mathcal{C}$ are convex, $\mathcal{D}$ is closed and $\mathcal{C}$ is relatively open in $\Delta$. By the separation theorem, reasoning as in the proof of Lemma \ref{Delta_lemma}, there is an affine functional $g \colon \Delta \to \R$ such that $g(x) < 0$ for $x \in \mathcal{C}$ and $g(x) \geq 0$ for $x \in \mathcal{D}$.
Let $G:= \{ x \mid g(x)=0  \}$. Then $G \cap \Delta$ is compact and convex and its extremal points lie on the edges of $\Delta$. Let us denote with $\overline{k}_i$ and $\overline{a}_i$ the corners of $\Delta$ corresponding respectively to $\delta_{k_i}$ and $\delta_{a_i}$. There are four type of edges: $[\overline{k}_i, \overline{k}_{i'} ]$, $[\overline{a}_j, \overline{a}_{j'} ]$, $[\overline{k}_i, 0 ]$, $[\overline{k}_i, \overline{a}_{j} ]$, with ${a}_{j} >0$. Analyzing the different possibilities, it follows that the extremal points of $G \cap \Delta$ corresponds to $\delta_0$ or to dyadic variables of the form $\alpha(k,a) \delta_k +(1-\alpha) (k,a) \delta _a$, with $\alpha (k,a)$ given by \eqref{alpha}. It follows that $\int \varphi \diff \Phi (x)=0$ for all extremal points of $G \cap \Delta$, and hence $\int \varphi \diff \Phi (x)=0$ for each $x \in G \cap \Delta$. Let $F= \{ x \in \R^{N+M+1} \mid \int \varphi \diff \Phi (x)=0, \, \sum_{i=0}^{N+M}x_i= 1 \}$. $F$ and $G$ are affine spaces of dimension $N+M-1$ and $G \subset F$, so that $G=F$. Hence $x \in C \iff g(x) < 0 \iff \int \varphi \diff \Phi (x) < 0$, which gives the thesis.

\end{proof}

\begin{lem}
\label{concavity_lemma}
The function $\varphi \colon (K, +\infty) \to \mathbb{R}$ is concave on $(K,0)$ and on $[0, +\infty)$ and right continuous.
\end{lem}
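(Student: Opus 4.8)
The plan is to run the whole argument on the characterisation provided by Lemma~\ref{Delta_lemma_2}: for a finitely supported $\mu$ with support in $(K,+\infty)$ one has $\mu\in\mathcal{N}^c \iff \int\varphi\,\diff\mu<0$, equivalently $\mu\in\mathcal{N}\iff\int\varphi\,\diff\mu\geq0$. Against this I would play two structural facts about $u$: the concave–order monotonicity $F\leq_{cv}G\Rightarrow u(F)\leq u(G)$, and the mixture convexity of $\mathcal{N}$ and $\mathcal{N}^c$ from Lemma~\ref{intermezzo}. Concavity is then proved by contradiction. Suppose it fails on $[0,+\infty)$, so there are $a_1,a_2\in[0,+\infty)$ and $\lambda\in(0,1)$ with $a:=\lambda a_1+(1-\lambda)a_2$ and $\lambda\varphi(a_1)+(1-\lambda)\varphi(a_2)>\varphi(a)$. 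Fix a calibrating atom $k\in(K,0)$ (nonempty since $K<0$ by Lemma~\ref{intermezzo}), recalling $\varphi(k)<0<\varphi(a)$ from \eqref{phi1}--\eqref{phi2}.

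First I would compare, for $\beta\in[0,1]$, the measure $\mu_\beta:=\beta\delta_k+(1-\beta)\delta_a$ with its mean preserving spread $\nu_\beta:=\beta\delta_k+(1-\beta)\bigl(\lambda\delta_{a_1}+(1-\lambda)\delta_{a_2}\bigr)$, obtained by spreading the atom at $a$ about its barycentre while leaving the atom at $k$ fixed. Then $\nu_\beta\leq_{cv}\mu_\beta$, so $u(\nu_\beta)\leq u(\mu_\beta)$. On the $\varphi$ side, $\int\varphi\,\diff\mu_\beta=\beta\varphi(k)+(1-\beta)\varphi(a)$ is strictly decreasing in $\beta$ and vanishes at $\beta^\ast=\alpha(k,a)\in(0,1)$ (Lemma~\ref{max_lemma} and \eqref{alpha}), while at that same $\beta^\ast$ one computes $\int\varphi\,\diff\nu_{\beta^\ast}=(1-\beta^\ast)\bigl(\lambda\varphi(a_1)+(1-\lambda)\varphi(a_2)-\varphi(a)\bigr)>0$.

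The crux is that the exact boundary value $\beta^\ast$ is inconclusive, since there $u(\nu_{\beta^\ast})=0$ is perfectly compatible with $\int\varphi\,\diff\nu_{\beta^\ast}>0$; this is exactly the degenerate case, and I expect it to be the main obstacle. I would resolve it by a strict perturbation: by continuity in $\beta$, for $\beta$ slightly above $\beta^\ast$ one still has $\int\varphi\,\diff\nu_\beta>0$ while now $\int\varphi\,\diff\mu_\beta<0$. For such a $\beta$, Lemma~\ref{Delta_lemma_2} gives $\mu_\beta\in\mathcal{N}^c$, hence $u(\mu_\beta)<0$, and $\nu_\beta\in\mathcal{N}$, hence $u(\nu_\beta)\geq0$, contradicting $u(\nu_\beta)\leq u(\mu_\beta)<0$. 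Thus $\varphi$ is concave on $[0,+\infty)$. The same argument with the roles reversed — fixing $a>0$ and splitting an atom at $k=\lambda k_1+(1-\lambda)k_2\in(K,0)$ — yields concavity on $(K,0)$.

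Finally, right continuity. On the open intervals $(K,0)$ and $(0,+\infty)$ concavity already forces continuity, so only right continuity at $0$ remains, i.e. $\lim_{a\to0^+}\varphi(a)=\varphi(0)=0$. Here I would invoke the law determined inequality $u(\xi)\leq\E_{\mathbb{P}}[\xi]$: applied to $\mu=\alpha\delta_{k_0}+(1-\alpha)\delta_a$, acceptability forces the mean $\alpha k_0+(1-\alpha)a\geq0$, whence $\alpha(k_0,a)\leq a/(a+|k_0|)\to0$ as $a\to0^+$. Substituting into \eqref{phi2}, $\varphi(a)=\varphi(k_0)\bigl[1+1/(\alpha(k_0,a)-1)\bigr]\to\varphi(k_0)\cdot0=0$, which together with $\varphi(a)>0$ for $a>0$ gives $\varphi(0^+)=0=\varphi(0)$ and completes the proof.
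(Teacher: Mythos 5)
Your proposal is correct, but it is organised differently from the paper's proof, so a comparison is worthwhile. Both arguments rest on the same two pillars — isotonicity of $u$ with respect to $\leq_{cv}$ and the finite-support characterization $u(\xi)<0 \iff \E[\varphi(\xi)]<0$ of Lemma~\ref{Delta_lemma_2} — but they deploy them in opposite directions. The paper argues \emph{directly}: it takes the two calibrated null measures $\alpha_i\delta_{k_i}+(1-\alpha_i)\delta_1$, invokes CxLS to keep the mixture at utility zero, merges the two negative atoms into their barycentre $k=\lambda k_1+(1-\lambda)k_2$ via the concave order (which can only increase $u$), and then reads the inequality $\varphi(k)\geq\lambda\varphi(k_1)+(1-\lambda)\varphi(k_2)$ off Lemma~\ref{Delta_lemma_2} by algebra; no degenerate case arises because only the non-strict implication $u\geq0\Rightarrow\int\varphi\geq0$ is needed. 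You instead argue by \emph{contradiction}, spreading the atom at the barycentre and perturbing the mixing weight $\beta$ past the calibration point to force strict, opposite signs on $\int\varphi\,\diff\mu_\beta$ and $\int\varphi\,\diff\nu_\beta$; this is sound, and your perturbation is exactly the right device to escape the inconclusive boundary value, though note you never actually need the identification $\beta^\ast=\alpha(k,a)$ — the existence of the zero crossing already follows from $\varphi(k)<0<\varphi(a)$ — and you should dispose explicitly of the trivial case $a=\lambda a_1+(1-\lambda)a_2=0$, which forces $a_1=a_2=0$. Where your route genuinely pays off is right continuity at $0$: the paper builds random variables $\xi_n=k_0\Ind_{B_{\alpha_n}}+a_n\Ind_{B_{\alpha_n}^c}$, passes to a limit in probability and invokes the Fatou property to show $\alpha(k_0,a_n)\to0$, whereas you apply $u\leq\E_{\mathbb{P}}$ directly to the calibrated two-point measures to get the explicit bound $\alpha(k_0,a)\leq a/(a+|k_0|)$ and conclude from \eqref{phi2}. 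Both proofs ultimately rest on $u\leq\E_{\mathbb{P}}$, but yours eliminates the construction and the limit argument entirely, which is a genuine simplification.
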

\begin{proof}
Take $k_1, k_2 <0$ and let $\alpha_1:=\alpha(k_1,1)$ and $\alpha_2:= \alpha(k_2,1)$, so that 
\[
u(\alpha_1 \delta_{k_1} + (1-\alpha_1) \delta_1)= u(\alpha_2 \delta_{k_2} +(1-\alpha_2) \delta_1) =0.
\]
From CxLS, for each $\gamma \in (0,1)$, it holds that 
\[
u( \gamma (\alpha_1 \delta_{k_1} + (1-\alpha_1) \delta_1 ) + (1-\gamma) (\alpha_2 \delta_{k_2} +(1-\alpha_2) \delta_1 ) )=0,
\]
or equivalently
\[
u \left ( \gamma  \alpha_1 \delta_{k_1} + (1-\gamma) \alpha_2 \delta_{k_2} + \left [ \gamma (1-\alpha_1)  + (1-\gamma) (1-\alpha_2) \right ] \delta_1 \right )=0.
\]
Let 
\[ 
\lambda = \frac{\gamma \alpha_1}{\gamma \alpha_1 + (1-\gamma) \alpha_2}
\]
and $k=\lambda k_1 + (1-\lambda) k_2$.
Then
\[
\gamma  \alpha_1 \delta_{k_1} + (1-\gamma) \alpha_2 \delta_{k_2} \leq_{cv} (\gamma \alpha_1 + (1-\gamma) \alpha_2) \delta_k,
\]
hence the isotonicity of $u$ with respect to the concave order implies
\[
u \left ( (\gamma \alpha_1 + (1-\gamma) \alpha_2) \delta_k  + \left [ \gamma (1-\alpha_1)  + (1-\gamma) (1-\alpha_2) \right ] \delta_1 \right ) \geq 0,
\]
and from Lemma \ref{Delta_lemma_2} we get
\[
(\gamma \alpha_1 + (1-\gamma) \alpha_2) \varphi(k) + \left [ \gamma (1-\alpha_1)  + (1-\gamma) (1-\alpha_2) \right ] \geq 0,
\]
or
\begin{align*}
\varphi (k) &\geq - \frac{(1-\gamma)(1-\alpha_2) + \gamma (1- \alpha_1) }{\gamma \alpha_1 +(1-\gamma) \alpha_2}  \\
&=\frac{\gamma \alpha_1}{\gamma \alpha_1 + (1-\gamma) \alpha_2 } ( - \frac{1-\alpha_1}{\alpha_1}) + 
\frac{(1-\gamma) \alpha_2}{\gamma \alpha_1 + (1-\gamma) \alpha_2 } ( - \frac{1-\alpha_2}{\alpha_2})  \\
&= \lambda \varphi(k_1) + (1-\lambda) \varphi (k_2).
\end{align*}
A similar argument can be used to establish concavity on $[0, +\infty)$. \\
Let $a_n \downarrow 0$ and let $\alpha_n:= \alpha(k_0, a_n)$, with $\alpha$ given by \eqref{alpha}. From the monotonicity of $\alpha$, the sequence $\alpha_n$ is nonincreasing; we denote with $\alpha_0$ its limit. Let $B_{\alpha_n} \in \F $ with $ \p(B_{\alpha_n})=\alpha_n$, and let  $\xi_n=k_0 \Ind_{B_{\alpha_n}} + a_n \Ind_{B_{\alpha_n}^c}$. Then $\xi_n \overset{\p}{\to} \xi$, with $\xi=k_0 \Ind_{B_{\alpha_0}} $, and from the Fatou property $u(\xi) \geq \limsup u(\xi_n) \geq 0$, since $\xi_n \in \mathcal{A}$. Hence $\E_{\p}[\xi] = k_0 \p [B_{\alpha_0}] \geq u(\xi) \geq 0$, which gives $\alpha_0=0$. 
Since $\varphi(a_n)= \frac{-\alpha_n \varphi(k_0)}{1-\alpha_n}$, it follows that $\varphi(a_n) \to 0$ when $a_n \to 0$. 
\end{proof}
From now on we define
\[
\varphi(K)=\lim_{x \downarrow K} \varphi(x).
\]
\begin{lem}
\label{xi_lemma}
If $\xi \geq K$ a.s., then $u(\xi) < 0$ if and only if $E[\varphi(\xi)] <0$. 
\end{lem}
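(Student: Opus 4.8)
The plan is to reduce the general statement to the finitely supported case already settled in Lemma~\ref{Delta_lemma_2} by a monotone discretization of $\xi$ \emph{from above}. Concretely, since $\xi\in L^{\infty}$ and $\xi\ge K$ a.s., I would fix a bound $M$ for the essential range of $\xi$ and construct a nonincreasing sequence $(\eta_n)$ of finitely supported random variables, all taking values in $(K,+\infty)$, with $\eta_n\downarrow\xi$ a.s.; for instance, round $\xi$ upward to a dyadic grid on $[K,M]$ chosen to avoid the point $K$, so that $\eta_n\ge\xi$, $\eta_n\to\xi$, and $\sup_n\lVert\eta_n\rVert_{\infty}<+\infty$. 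Each $\eta_n$ is then covered by Lemma~\ref{Delta_lemma_2}. The two properties of $\varphi$ I would exploit are that it is nondecreasing on $(K,+\infty)$ (immediate from \eqref{phi1} and \eqref{phi2}, together with $\varphi<0$ on $(K,0)$, $\varphi(0)=0$, $\varphi>0$ on $(0,+\infty)$) and right continuous (Lemma~\ref{concavity_lemma}, extended to the endpoint by the definition $\varphi(K)=\lim_{x\downarrow K}\varphi(x)$).

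For the implication $\E[\varphi(\xi)]<0\Rightarrow u(\xi)<0$, I would use that $\eta_n\downarrow\xi$ together with the right continuity and monotonicity of $\varphi$ forces $\varphi(\eta_n)\downarrow\varphi(\xi)$ pointwise; by bounded convergence $\E[\varphi(\eta_n)]\to\E[\varphi(\xi)]<0$, so $\E[\varphi(\eta_n)]<0$ for $n$ large. Lemma~\ref{Delta_lemma_2} then yields $u(\eta_n)<0$, and the monotonicity of $u$ with $\xi\le\eta_n$ gives $u(\xi)\le u(\eta_n)<0$.

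For the converse I would argue by contraposition, i.e.\ prove $\E[\varphi(\xi)]\ge0\Rightarrow u(\xi)\ge0$. Here approximation from above is again the right choice: since $\eta_n\ge\xi$ and $\varphi$ is nondecreasing, $\E[\varphi(\eta_n)]\ge\E[\varphi(\xi)]\ge0$, whence Lemma~\ref{Delta_lemma_2} gives $u(\eta_n)\ge0$. Because $\eta_n\to\xi$ in probability with $\sup_n\lVert\eta_n\rVert_{\infty}<+\infty$, the Fatou property \eqref{Fatou} delivers $u(\xi)\ge\limsup_n u(\eta_n)\ge0$. Combining the two implications gives the asserted equivalence.

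I would expect the only delicate point to be the direction of all the inequalities: both $u$ and $\varphi$ are monotone, so one must approximate $\xi$ on the correct side for the estimates to line up. Approximating from above, rather than from below, is what makes everything work with a single one-sided regularity hypothesis on $\varphi$, namely right continuity, and simultaneously lets the Fatou property close the converse; approximating from below would instead require left continuity of $\varphi$, which may fail at $0$, where $\varphi$ is allowed to jump. The treatment of the endpoint $K$ — never evaluating $\varphi$ at $K$ for the approximants, while matching $\varphi(\xi)=\varphi(K)$ on $\{\xi=K\}$ through the definition $\varphi(K)=\lim_{x\downarrow K}\varphi(x)$ — is the remaining bookkeeping and poses no real difficulty.
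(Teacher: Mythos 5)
Your proof is correct and follows essentially the same route as the paper's: approximate $\xi$ from above by finitely supported variables with values in $(K,+\infty)$, apply Lemma~\ref{Delta_lemma_2} to them, and pass to the limit using the monotonicity of $u$ and $\varphi$ and the right continuity of $\varphi$ (the paper phrases this as a chain of three equivalences and uses the sup-norm continuity of $u$ where you use the Fatou property, but the substance is identical). One correction: ``bounded convergence'' is not available in your first implication, since $\varphi$ may be unbounded below near $K$ (indeed $\varphi(K)=\lim_{x\downarrow K}\varphi(x)$ can equal $-\infty$, as in the paper's logarithmic example, so $\E[\varphi(\xi)]=-\infty$ is a genuine case covered by the lemma); instead apply monotone convergence to the nonincreasing sequence $\varphi(\eta_n)\downarrow\varphi(\xi)$, which is bounded above by the integrable $\varphi(\eta_1)$ --- exactly the structure you had already established.
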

\begin{proof}
Let $\xi_n > K$, $\xi_n$ finitely supported, with $\xi_n \downarrow \xi$ and $\Vert \xi_n - \xi \Vert_{\infty} \to 0$. Then
\[
u(\xi) < 0 \iff \exists n \mbox{ s.t. }u(\xi_n) <0 \iff \exists n \mbox{ s.t. } \E[\varphi(\xi_n)] <0 \iff \E[\varphi(\xi)] <0,
\]
where the second equivalence follows from Lemma \ref{Delta_lemma_2} and the last equivalence from the right continuity of $\varphi$ in $0$.
\end{proof}
\begin{lem}
$\varphi$ is concave on $(K, +\infty)$ and hence continuous on $(K, +\infty)$. 
\end{lem}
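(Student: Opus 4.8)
The plan is to reduce the whole statement to the single concavity inequality
\[
\varphi(\lambda x_1+(1-\lambda)x_2)\ \geq\ \lambda\varphi(x_1)+(1-\lambda)\varphi(x_2)
\]
for arbitrary $x_1<x_2$ in $(K,+\infty)$ and $\lambda\in(0,1)$, and to prove it by a single device combining the sign criterion of Lemma~\ref{Delta_lemma_2} with the isotonicity of $u$ with respect to the concave order $\leq_{cv}$. By Lemma~\ref{concavity_lemma} the inequality is already known when $x_1,x_2$ lie on the same side of $0$, so the new content is the case $x_1<0<x_2$; however the argument below is insensitive to the position of $x_1,x_2$ and disposes of all cases at once. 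Write $\nu:=\lambda\delta_{x_1}+(1-\lambda)\delta_{x_2}$, $\bar x:=\lambda x_1+(1-\lambda)x_2$ and $m:=\lambda\varphi(x_1)+(1-\lambda)\varphi(x_2)=\int\varphi\,\diff\nu$; since $\nu$ and $\delta_{\bar x}$ have the same mean one always has $\nu\leq_{cv}\delta_{\bar x}$.

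The key construction introduces an auxiliary reference point. When $m=0$, Lemma~\ref{Delta_lemma_2} gives $u(\nu)\geq 0$, hence $u(\delta_{\bar x})\geq u(\nu)\geq 0$ by concave-order isotonicity, and a second application of Lemma~\ref{Delta_lemma_2} yields $\varphi(\bar x)\geq 0=m$. When $m\neq 0$ I would choose $c\in(K,+\infty)$ with $\operatorname{sign}\varphi(c)=-\operatorname{sign}m$; this is possible because $\varphi>0$ on $(0,+\infty)$ and, since $u\neq\essinf$ forces $K<0$ by Lemma~\ref{intermezzo} together with \eqref{kappa}, also $\varphi<0$ on the nonempty interval $(K,0)$ by \eqref{phi1}. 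Setting $p:=\varphi(c)/(\varphi(c)-m)\in(0,1)$ and
\[
\rho:=p\,\nu+(1-p)\delta_c=p\lambda\,\delta_{x_1}+p(1-\lambda)\,\delta_{x_2}+(1-p)\,\delta_c,\qquad \rho':=p\,\delta_{\bar x}+(1-p)\,\delta_c,
\]
one has by construction $\int\varphi\,\diff\rho=pm+(1-p)\varphi(c)=0$, so Lemma~\ref{Delta_lemma_2} gives $u(\rho)\geq 0$. Moreover $\rho$ is a mean-preserving spread of $\rho'$, i.e. $\rho\leq_{cv}\rho'$, whence $u(\rho')\geq u(\rho)\geq 0$; all support points of $\rho$ and $\rho'$ exceed $K$, so Lemma~\ref{Delta_lemma_2} applies once more and yields $\int\varphi\,\diff\rho'=p\varphi(\bar x)+(1-p)\varphi(c)\geq 0$. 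Subtracting the identity $\int\varphi\,\diff\rho=0$ and dividing by $p>0$ gives exactly $\varphi(\bar x)\geq m$.

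This establishes concavity of $\varphi$ on the whole open interval $(K,+\infty)$, and continuity there is then automatic, a real-valued concave function on an open interval being continuous. The step I expect to require the most care is the bookkeeping around the reference point: one must verify that the sign choice makes $p\in(0,1)$, that a negative reference point is genuinely available (which is exactly where $K<0$, and hence $u\neq\essinf$, enters), and that the degenerate configurations ($m=0$, or $\bar x=0$, or $c$ coinciding with $x_1$ or $x_2$) are absorbed without circularity; in particular the case $\bar x=0$ is not assumed but falls out of the same computation, where the conclusion simply reads $0\geq m$. I would also check that no appeal to continuity of $\varphi$ at $0$ is made anywhere in the construction, so that the left-continuity $\varphi(0^-)=\varphi(0)=0$ emerges as a consequence of the concavity just proved rather than being presupposed.
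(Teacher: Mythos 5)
Your proof is correct, but it takes a genuinely different route from the paper's. The paper disposes of this lemma in one line: it reruns the CxLS mixing construction of Lemma~\ref{concavity_lemma} --- calibrated dyadic measures $\alpha(k,a)\delta_k+(1-\alpha(k,a))\delta_a$ with $u=0$, mixed via CxLS and then collapsed by the concave order --- substituting Lemma~\ref{xi_lemma} for Lemma~\ref{Delta_lemma_2} so that the argument is no longer confined to one side of $0$. You instead never touch CxLS or the $\alpha(k,a)$ calibration at this stage: you take the sign criterion of Lemma~\ref{Delta_lemma_2} as the sole interface to $u$, normalize $\int\varphi\diff\nu$ to zero by mixing in a reference atom $\delta_c$ of opposite sign --- available on both sides since $K<0$ (Lemma~\ref{intermezzo}\,b) with \eqref{kappa}) and $\varphi<0$ on $(K,0)$, $\varphi>0$ on $(0,+\infty)$ by \eqref{phi1}--\eqref{phi2} --- and then let isotonicity of $u$ with respect to $\leq_{cv}$ do all the work. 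The checks go through: $p=\varphi(c)/(\varphi(c)-m)\in(0,1)$ under your sign choice; $\rho\leq_{cv}\rho'$ because the concave order is preserved under mixtures; both $\rho$ and $\rho'$ are finitely supported with atoms strictly above $K$, so Lemma~\ref{Delta_lemma_2} suffices and you correctly do not need the stronger Lemma~\ref{xi_lemma}; the degenerate cases ($m=0$, $\bar x=0$, $c$ colliding with $x_1$ or $x_2$) are indeed absorbed; and there is no circularity, since Lemma~\ref{Delta_lemma_2} precedes Lemma~\ref{concavity_lemma} in the development. As to what each approach buys: the paper's is a minimal modification of an already-written proof, and Lemma~\ref{xi_lemma} is needed later anyway for Theorem~\ref{characterization}; yours isolates the real mechanism --- once the acceptance set is characterized on finitely supported laws by the sign of $\int\varphi$, isotonicity with respect to second-order stochastic dominance alone forces $\varphi$ to be concave --- handles all sign configurations of $x_1$, $x_2$ and $m$ uniformly, renders the one-sided concavity of Lemma~\ref{concavity_lemma} a corollary rather than a prerequisite, and obtains continuity on $(K,+\infty)$ (including at $0$) for free from finiteness and concavity on an open interval.
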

\begin{proof}
The proof proceeds along the lines of Lemma \ref{concavity_lemma}, using Lemma \ref{xi_lemma} instead of Lemma \ref{Delta_lemma_2} to have the stronger thesis. 
\end{proof}
We can finally prove the announced characterization:
\begin{thm} \label{characterization}
Let $u\colon L^{\infty} \to \R$ be a monetary, concave law determined utility with CxLS. Then there exists a concave $\overline {\varphi} \colon \mathbb{R} \to \mathbb{R} \cup \{ -\infty \}$ such that $u(\xi) \geq 0$ if and only if $E[\overline {\varphi} (\xi)] \geq 0$.
\end{thm}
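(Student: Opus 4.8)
The plan is to glue the function $\varphi$ of \eqref{phi1}--\eqref{phi2} to the value $-\infty$ below $K$ and to verify that the resulting function characterises the acceptance set. Concretely, I would set $\overline{\varphi}(x)=\varphi(x)$ for $x\ge K$ (with $\varphi(K)=\lim_{x\downarrow K}\varphi(x)\in[-\infty,+\infty)$) and $\overline{\varphi}(x)=-\infty$ for $x<K$. In the exceptional case $u(\xi)=\essinf(\xi)$, where $K$ is not defined, I would instead take $\overline{\varphi}(x)=0$ for $x\ge 0$ and $\overline{\varphi}(x)=-\infty$ for $x<0$; there $u(\xi)\ge 0\iff \xi\ge 0$ a.s.\ $\iff \E[\overline{\varphi}(\xi)]\ge 0$ is immediate, so from now on assume $u\ne\essinf$.

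First I would check that $\overline{\varphi}$ is concave on $\R$. The preceding lemma gives concavity (hence continuity and monotonicity) of $\varphi$ on $(K,+\infty)$, so the right limit at $K$ exists and adjoining it keeps $\varphi$ concave on $[K,+\infty)$; extending by $-\infty$ on $(-\infty,K)$ preserves concavity, since every chord with an endpoint in $\{\overline{\varphi}=-\infty\}$ has right-hand side $-\infty$. If $K=-\infty$, i.e.\ $u$ has the WC property by Proposition \ref{trichotomy}, then $\overline{\varphi}=\varphi$ on all of $\R$ and the assertion is exactly Lemma \ref{xi_lemma}. So assume $K>-\infty$. For $\xi$ with $\xi\ge K$ a.s.\ the claim is again Lemma \ref{xi_lemma}, because $\overline{\varphi}=\varphi$ there. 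It therefore remains to treat $\xi\in L^\infty$ with $\mathbb{P}(\xi<K)>0$, for which $\E[\overline{\varphi}(\xi)]=-\infty<0$; here I must prove $u(\xi)<0$, and this is the crux.

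The hard part is thus the implication $\mathbb{P}(\xi<K)>0\Rightarrow u(\xi)<0$, which I would prove by contradiction, assuming $u(\xi)\ge 0$. Pick $k_1<K$ with $p:=\mathbb{P}(\xi\le k_1)\in(0,1)$ (if $p=1$ then $u(\xi)\le k_1<0$ already). Writing $B:=\{\xi\le k_1\}$, replace $\xi$ on $B$ and on $B^c$ by its respective conditional mean, obtaining $\zeta$ with $\operatorname{Law}(\zeta)=p\,\delta_b+(1-p)\,\delta_a$, where $b=\E[\xi\mid B]\le k_1<K$ and $a=\E[\xi\mid B^c]$. Each conditional contraction is mean preserving, so $\xi\leq_{cv}\zeta$ and hence $0\le u(\xi)\le u(\zeta)$; combined with $u(\zeta)\le\E[\zeta]=pb+(1-p)a$ and $b<0$ this forces $a>0$. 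Thus $u(p\,\delta_b+(1-p)\,\delta_a)\ge 0$ with $b<0<a$.

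Finally I would re-spread this acceptable dyadic exactly as in the proof of Lemma \ref{intermezzo}b): using CxLS against $\delta_0$ (recall $u(0)=0$) and then the concave order to collapse the $\{0,a\}$-part onto an arbitrary level $a'>0$, one obtains for every $a'>0$ some $\alpha'\in(0,1)$ with $u(\alpha'\,\delta_b+(1-\alpha')\,\delta_{a'})\ge 0$ (for $a'\ge a$ monotonicity suffices; for $0<a'<a$ take the mixing weight $\lambda=a'/((1-p)a+p a')\in(0,1)$ and read off $\alpha'=\lambda p$). This says precisely that condition~C holds for $(b,a')$ for every $a'>0$, i.e.\ $b$ lies in the set defining $K$ in \eqref{kappa}, whence $b\ge K$, contradicting $b<K$. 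Hence $u(\xi)<0$, which supplies the missing direction and, together with the two easy cases, yields the equivalence $u(\xi)\ge 0\iff\E[\overline{\varphi}(\xi)]\ge 0$. The only genuinely delicate point is this last contradiction: recognising that subthreshold mass can always be ``pushed up'' by CxLS and the concave order to manufacture an acceptable dyadic anchored below $K$; the rest is bookkeeping about $\leq_{cv}$ and about the limit defining $\varphi(K)$.
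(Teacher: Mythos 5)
Your proof is correct and takes essentially the same approach as the paper: the same definition of $\overline{\varphi}$ (gluing $-\infty$ below $K$), Lemma \ref{xi_lemma} for the case $\xi \geq K$ a.s., and a conditional-expectation reduction of the case $\p(\xi<K)>0$ to a dyadic distribution anchored at some $b<K$. Your final contradiction via the argument of Lemma \ref{intermezzo}b) simply unpacks what the paper compresses into ``$u(\eta)<0$ by definition of $K$'' --- namely that CxLS is needed to upgrade acceptability of a dyadic $(b,a)$ for one $a>0$ to all $a'>0$, which is exactly the step you spell out.
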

\begin{proof}
If $u(\xi)=\essinf(\xi)$, then 
\[
\overline{\varphi} (x)=
\begin{cases}
  - \infty & \mbox{ if } x<0 \\
 \mbox{   } 0 & \mbox { if } x \geq 0
\end{cases}
\]
satisfies the thesis. 
If $u(\xi) \neq \essinf(\xi)$, define
\[
\overline{\varphi}=
\begin{cases}
  - \infty & \mbox{ if } x<K \\
\varphi(x)  & \mbox { if } x \geq K
\end{cases}
\]
with $\varphi$ and $K$ defined as before. If $\xi \geq K$ a.s., then Lemma \ref{xi_lemma} gives that $u(\xi) \geq 0$ if and only if $E[\varphi(\xi)] \geq 0$, which is the thesis. If $\p(\xi < K) >0$, then $\E[ \overline{\varphi}(\xi)] =-\infty$. In order to show that $u(\xi) < 0$, let $\mathcal{B}$ be the algebra generated by the events $\{ \xi < K \}$ and $\{ \xi \geq K \}$, and let $\eta = \E [\xi \Ind_{\xi < K} + \xi \Ind_{\xi \geq 0} | \mathcal{B}]$. Then 
\[
u(\eta) \geq u(\xi \Ind_{\xi < K} + \xi \Ind_{\xi \geq 0}) \geq u(\xi),
\]
and $u(\eta) <0$ by definition of $K$, so that $u(\xi) <0$, which completes the proof. 
\end{proof}

\subsection{Examples}
\begin{exmp}[Essential infimum]
Let $\varphi \colon \R \to \R \cup \{ -\infty \}$, with
\[
\varphi(x)=
\begin{cases}
  - \infty & \mbox{ if } x<0 \\
 \mbox{   } 0 & \mbox { if } x \geq 0
\end{cases}
\]
Then $\A=\{ \xi | \xi \geq 0 \}$ and $u(\xi)=\essinf(\xi)$.
\end{exmp}
\begin{exmp}[Finite Shortfall]
Let $\varphi \colon \R \to \R$ concave and increasing with $\varphi(0)=0$. Then 
\[
\mathcal{A}= \{ \xi \in L^{\infty} | \E [ \varphi(\xi)] \geq 0 \}
\]
is the acceptance set of a concave law determined utility $u$ with CxLS. In the particular case $\varphi (x)=x$ we have $u(\xi)= \E[\xi]$. Let us remark that the function  $\varphi(x)=x$ for $x\leq 0$ and  
$\varphi(x)=0$ for $x\geq 0$ also defines the essential infimum.
\end{exmp}
\begin{exmp}[Truncated shortfall]
\label{truncated}
Let $\varphi \colon \R \to \R$ concave and increasing with $\varphi(0)=0$ and $K<0$. Set $\overline{\varphi} \colon \R \to \R \cup \{ -\infty \}$
\[
\overline{\varphi}(x):=
\begin{cases}
 - \infty & \mbox{ if } x<K \\
\varphi(x)  & \mbox { if } x \geq K
\end{cases}
\]
Then $\A =\{  \xi | \E[\varphi(\xi) \geq 0], \xi \geq K \}$.
\end{exmp}
\begin{exmp}[Truncated mean]
Let 
\[
\overline{\varphi}(x):=
\begin{cases}
x  & \mbox { if } x \geq -1 \\
 - \infty & \mbox{ if } x<-1 
\end{cases}
\]
Then 
\[
\mathcal{A}=\left\{ \xi \in L^{\infty }\text{ s.t. }\xi \geq -1\text{ and }%
\E[\xi ]\geq 0\right\},
\]
and  
\[
u(\xi )=\min (\E[\xi ],1+\essinf (\xi )).
\]
It is easy to see that mixture continuity for $\lambda \to 0^+$ fails, so from Proposition \ref{WC_prop} the WC property does not hold. Indeed the penalty function $c$ is given by 
\[
c(\Q)=1-\essinf \frac{\diff \Q}{\diff \p}
\]
and does not have $\sigma(L^1, L^{\infty})$ compact lower level sets, since $c(\Q) \leq 1$.
\end{exmp}
\begin{exmp}
Let $K<0$ and 
\[
\varphi(x)=
\begin{cases}
\log (x-K)-\log(-K) &   \mbox{ if } x>K \\
 - \infty  & \mbox { if } x \leq K
\end{cases}
\]
\end{exmp}
\begin{exmp}
\[
\varphi(x):=
\begin{cases}
\sqrt{x+1}& \mbox{ for } x \ge -1 \\
- \infty& \mbox{ elsewhere}.
\end{cases}
\]
\end{exmp}
Note that in the last two examples $\varphi^{\prime}_+(K) = +\infty$, so they do not belong to the family of ``truncated shortfalls'' considered in Example \ref{truncated}.

\section{Appendix}
\begin{lem} 
\label{nonatomic}
Let $(\Omega, \mathcal{F}, \mathbb{P})$ be an atomless space. Then it is possible to construct two increasing families of measurable subsets $(A_t)_{0 \leq t \leq 1}$ and $(B_t)_{0 \leq t \leq 1}$, such that for each $t \in [0,1], \, P(A_t)=P(B_t)=t$, and the sigma algebras $\mathcal{A}:=\sigma(A_t; 0 \leq t \leq 1)$ and $\mathcal{B}:=\sigma(B_t; 0 \leq t \leq 1)$ are independent.
\end{lem}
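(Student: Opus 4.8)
The plan is to reduce the statement to the existence of two \emph{independent} random variables, each uniformly distributed on $[0,1]$, and then to realize the desired families as their sublevel sets. The atomlessness of $(\Omega,\F,\p)$ enters only once, through the classical splitting property: if $\p$ is atomless, then for every $F \in \F$ and every $0 \le \lambda \le \p(F)$ there is a measurable $G \subseteq F$ with $\p(G)=\lambda$ (proved by an exhaustion/Zorn argument on subsets of $F$).

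Iterating this property, I would construct by induction a sequence $(\varepsilon_n)_{n \ge 1}$ of independent Bernoulli$(1/2)$ random variables. Suppose $\varepsilon_1,\dots,\varepsilon_n$ have been defined; the sets on which $(\varepsilon_1,\dots,\varepsilon_n)$ is constant form a partition of $\Omega$ into $2^n$ cells, each of probability $2^{-n}$. Using the splitting property, I cut every cell into two halves of equal probability $2^{-n-1}$ and let $\varepsilon_{n+1}=1$ on one half of each cell. Then for every pattern the conditional probability $\p(\varepsilon_{n+1}=1 \mid \varepsilon_1,\dots,\varepsilon_n)=1/2$, so $\varepsilon_{n+1}$ is a fair coin flip independent of $(\varepsilon_1,\dots,\varepsilon_n)$, completing the induction.

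Next I would split the index set into odd and even indices and set
\[
U := \sum_{k \ge 1}\varepsilon_{2k-1}\,2^{-k}, \qquad V := \sum_{k \ge 1}\varepsilon_{2k}\,2^{-k}.
\]
Since the $\varepsilon_n$ are i.i.d. and $U$, $V$ are measurable with respect to the two mutually independent $\sigma$-algebras generated by disjoint blocks of them, $U$ and $V$ are independent, and each is uniformly distributed on $[0,1]$. I then define $A_t := \{U \le t\}$ and $B_t := \{V \le t\}$ for $t \in [0,1]$. These families are increasing in $t$; moreover $U$ and $V$ have no atoms, so $\p(A_t)=\p(U \le t)=t$ and $\p(B_t)=t$, with $A_0,B_0$ null and $A_1=B_1=\Omega$, as required.

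Finally I would verify that $\mathcal{A}:=\sigma(A_t;\,0 \le t \le 1)=\sigma(U)$ and $\mathcal{B}:=\sigma(B_t;\,0 \le t \le 1)=\sigma(V)$. One inclusion is clear since each $A_t \in \sigma(U)$; for the reverse, $U=\inf\{t\in[0,1]:\omega\in A_t\}$ is $\mathcal{A}$-measurable, and likewise for $V$. Independence of $U$ and $V$ then transfers directly to independence of $\mathcal{A}$ and $\mathcal{B}$, which is the assertion. I expect the only delicate point to be the inductive coin-flip construction, namely checking that each new split is genuinely independent of the past; the identification of the generated $\sigma$-algebras with $\sigma(U)$ and $\sigma(V)$ is routine.
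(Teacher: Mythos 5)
Your proof is correct, but it cannot be said to follow the paper's route, because the paper gives no route at all: its entire proof of Lemma~\ref{nonatomic} is the citation ``See \cite{Delbaen2012}.'' What you have written is a complete, self-contained argument along the standard lines, and it closes the gap the paper leaves to the reference. The three ingredients are all sound: (i) the Sierpi\'nski splitting property of atomless measures, which you invoke correctly and which is indeed the only place atomlessness enters; (ii) the inductive construction of i.i.d.\ fair bits $(\varepsilon_n)$ --- here the independence check is legitimate because the cells of the partition generated by $(\varepsilon_1,\dots,\varepsilon_n)$ form a finite $\pi$-system generating $\sigma(\varepsilon_1,\dots,\varepsilon_n)$, so verifying $\p(\{\varepsilon_{n+1}=1\}\cap C)=\tfrac12\p(C)$ cell by cell suffices, and by induction $(\varepsilon_1,\dots,\varepsilon_{n+1})$ is uniform on $\{0,1\}^{n+1}$; and (iii) the odd/even block splitting, where independence of $U$ and $V$ follows from the grouping lemma for disjoint subfamilies of independent random variables. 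Two small remarks. First, the identification $\mathcal{A}=\sigma(U)$ is cleaner than your $\inf$-formula suggests: $\{U\le t\}=A_t$ for every $t\in[0,1]$ already gives both inclusions, since these sets generate $\sigma(U)$ for a $[0,1]$-valued variable; the $\inf$ representation, while true, needs the monotone structure of $(A_t)$ to justify measurability and is unnecessary. Second, you do not even need the exact equality $\mathcal{A}=\sigma(U)$: the inclusions $\mathcal{A}\subseteq\sigma(\varepsilon_{2k-1};k\ge1)$ and $\mathcal{B}\subseteq\sigma(\varepsilon_{2k};k\ge1)$ already yield the claimed independence. Note also that $A_0=\{U=0\}$ may be nonempty, but the lemma only asks $\p(A_0)=0$, which your construction delivers.
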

\begin{proof}
See \cite {Delbaen2012}.
\end{proof}
\begin{lem}
\label{conditioning}
If $u \colon L^{\infty} (\Omega, \mathcal{F}, \p) \to \mathbb{R}$ is a concave and law determined utility, then for each $\mathcal{G} \subset \mathcal{F}$ and $\xi \in L^{\infty}$ it holds that 
$u \left ( \E \left [ \xi | \mathcal{G} \right ] \right ) \geq u(\xi)$.
\end{lem}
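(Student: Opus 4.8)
The plan is to deduce the inequality from the isotonicity of concave law determined utilities with respect to the concave order $\le_{cv}$, which is already recorded in the preliminaries. Concretely, I would show that $\mathrm{Law}(\xi)\le_{cv}\mathrm{Law}(\E[\xi\mid\mathcal{G}])$ and then read off $u(\xi)\le u(\E[\xi\mid\mathcal{G}])$ directly.

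The first, and essentially only computational, step is a conditional Jensen inequality. For any concave $f\colon\R\to\R$ one has $f(\E[\xi\mid\mathcal{G}])\ge\E[f(\xi)\mid\mathcal{G}]$ almost surely; taking expectations gives $\E[f(\E[\xi\mid\mathcal{G}])]\ge\E[f(\xi)]$. Since $\xi\in L^{\infty}$, both $\xi$ and $\E[\xi\mid\mathcal{G}]$ are bounded, so their laws lie in $\mathcal{M}_{1,c}$, and the preceding inequality, holding for every concave $f$, is precisely the statement $\mathrm{Law}(\xi)\le_{cv}\mathrm{Law}(\E[\xi\mid\mathcal{G}])$. Applying the implication $F\le_{cv}G\Rightarrow u(F)\le u(G)$ then yields the claim. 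One caveat is that in the preliminaries this implication is phrased alongside the monetary axioms; however it is a property of concave law determined functionals alone and uses neither monotonicity nor translation invariance, exactly as in the Strassen-type results of \citet{BauerleMuller2006} and \citet{ChernyGrigoriev} cited there, so it applies verbatim. In any case the lemma is only invoked for monetary $u$, where no such issue arises.

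I expect the genuine difficulty to be hidden inside that cited isotonicity, so it is worth sketching the self-contained mechanism behind it, which also explains why I would \emph{not} attempt to prove Lemma~\ref{conditioning} by a naive finite averaging. Using the atomless hypothesis, via Lemma~\ref{nonatomic}, one can realise $\E[\xi\mid\mathcal{G}]$ as an average of measure-preserving rearrangements of $\xi$ that fix $\mathcal{G}$: identifying each cell of a finite partition generating $\mathcal{G}$ with $[0,1]$ and letting $T_s$ act by the rotation $t\mapsto t+s\ (\mathrm{mod}\ 1)$ inside every cell simultaneously gives $\int_0^1 \xi\circ T_s\diff s=\E[\xi\mid\mathcal{G}]$ pointwise, where each $\xi\circ T_s$ is equidistributed with $\xi$ and hence $u(\xi\circ T_s)=u(\xi)$. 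Concavity then wants to yield $u(\E[\xi\mid\mathcal{G}])\ge\int_0^1 u(\xi\circ T_s)\diff s=u(\xi)$.

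The main obstacle is precisely the passage from this continuous average to a usable Jensen inequality. The Riemann sums $\tfrac1N\sum_{j}\xi\circ T_{j/N}$ converge to $\E[\xi\mid\mathcal{G}]$ only boundedly almost surely, equivalently in every $L^{p}$ with $p<\infty$, but not in $L^{\infty}$; and for irrational mixing proportions the exact conditional mean can never be written as a finite rearrangement-average in $L^{\infty}$, as one already sees for $\xi=\Ind_{[0,a]}$ with $a$ irrational. Thus a finite concavity estimate cannot reach the limit without a continuity property along bounded almost sure convergence, i.e.\ a Fatou or Lebesgue-type hypothesis that the statement deliberately does not assume. This is exactly why I would route the argument through $\le_{cv}$: Strassen's theorem realises the order by a conditional-expectation coupling and packages this limiting content once and for all, so that grinding through the rearrangement limit directly becomes the hard and avoidable part.
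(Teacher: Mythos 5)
Your reduction is correct, and it is a genuinely different route from the paper's. You prove $\operatorname{Law}(\xi)\leq_{cv}\operatorname{Law}(\E[\xi|\mathcal{G}])$ by conditional Jensen and then invoke the $\leq_{cv}$-isotonicity recorded in the preliminaries. The paper instead proves the lemma from first principles, precisely so as not to lean on that cited fact: it takes $\xi_n$ i.i.d.\ with the law of $\xi$ (possible by atomlessness), so that $u(\xi_n)=u(\xi)$; by the strong law $\frac{1}{n}(\xi_1+\cdots+\xi_n)\to\E[\xi]$ a.s.\ and boundedly, so the Fatou property \eqref{Fatou} together with concavity gives $u(\E[\xi])\geq\limsup_n u\bigl(\tfrac{1}{n}\sum_{k\leq n}\xi_k\bigr)\geq u(\xi)$; conditionally i.i.d.\ copies give the case of finitely generated $\mathcal{G}$; and an $L^{\infty}$-approximation $\E[\xi|\mathcal{G}_n]\to\E[\xi|\mathcal{G}]$ plus norm continuity of $u$ gives the general case. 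What the paper's route buys is independence from the isotonicity citation, and that is exactly where your proposal is fragile: the justification you offer for the isotonicity (``Strassen-type results'', i.e.\ realising $F\leq_{cv}G$ by a coupling $\xi'\sim F$, $\eta'\sim G$ with $\E[\xi'|\eta']=\eta'$) derives $\leq_{cv}$-isotonicity \emph{from} dilatation monotonicity $u(\E[\,\cdot\,|\mathcal{G}])\geq u(\cdot)$ --- the very lemma you are proving --- so as a self-contained argument it is circular. The clean non-circular grounding available inside the paper is the Kusuoka representation \eqref{Kusuoka}: each $u_{\alpha}$ is $\leq_{cv}$-isotone by an elementary computation (e.g.\ via $u_{\alpha}(\xi)=\sup_z\{z-\frac{1}{\alpha}\E[(z-\xi)^+]\}$, a supremum of expectations of concave functions of $\xi$), and an infimum over $\nu$ of mixtures of isotone functionals plus constants is isotone. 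With that substitution your proof stands.

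Two further remarks. First, your closing claim that a direct averaging proof is blocked because a Fatou-type hypothesis ``is deliberately not assumed'' misjudges the situation: for law determined concave monetary utilities on an atomless space the Fatou property is automatic (a fact contained in \citet{Delbaen2012}, the paper's standing reference), and the paper's proof is exactly the averaging argument you dismiss --- SLLN plus \eqref{Fatou} --- with i.i.d.\ copies in place of your rotations, which also sidesteps the irrational-proportion issue you raise. Second, there is a scope mismatch affecting both arguments equally: the lemma is stated for arbitrary concave law determined utilities, while both your tool ($\leq_{cv}$-isotonicity) and the paper's tools (Fatou, norm continuity) are stated in the monetary setting; as you note, the lemma is only ever applied to monetary $u$, so this is harmless.
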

\begin{proof}
Let start with the case $\mathcal{G}=\{ \emptyset, \Omega\}$. Take $\xi_n$ i.i.d. with $\operatorname{Law}(\xi_n)=\operatorname{Law}(\xi)$, so that $u(\xi_n)=u(\xi)$. From the strong law of large numbers, 
\[
\frac{\xi_1+ \dots + \xi_n}{n} \overset{a.s.}{\to} \E[\xi],
\]
and from the Fatou property and the concavity of $u$
\[
u(\E[\xi]) \geq \limsup_{n \to +\infty} u \left( \frac{\xi_1+ \dots + \xi_n}{n} \right ) 
\geq \limsup_{n \to +\infty} \frac{1}{n} \sum_{k=1}^n u( \xi)=u(\xi).
\]
Let now $\mathcal{G}$ be generated by a finite partition $G_1, \dots, G_m$. Since $\Omega$ is atomless, it is possible to construct a sequence $\xi_n$ that is i.i.d. for each conditional probability $\p[ \cdot |G_k]$, and such that $\operatorname{Law}[ \xi_n | G_k] =  \operatorname{Law}[ \xi | G_k]$, for each $k=1, \dots, m$. The previous reasoning can be applied to each $G_k$, and since 
\[
\frac{\xi_1 + \dots + \xi_n }{n} \overset{a.s.}{\to} \E[\xi | \mathcal{G}],
\]
it follows that $u(\E[\xi|\mathcal{G}]) \geq u(\xi)$.
Finally, for a general $\mathcal{G}$ and for a fixed $\xi \in L^{\infty}$, it is always possible to construct a sequence of finitely generated $\mathcal{G}_n$ such that 
\[
\E[\xi | \mathcal{G}_n] \overset{L^{\infty}}{\to} \E [ \xi | \mathcal{G}],
\]
so we have 
\[
u(\E[\xi | \mathcal{G}]) = \lim_{n \to +\infty} u( \E[ \xi | \mathcal{G}_n] ) \geq u(\xi).
\]
\end{proof}

\begin{lem} \label{cont}
Let $\phi_i \colon [0,1] \to \mathbb{R}$ be nondecreasing and convex with $D_0 < \phi_i < D_1$, and let $c_i$ be bounded from below. Then $\phi(x):=\inf_{i}\big(\phi_i(x) + c_i\big)$ is Lipschitz on compact subsets of $(0,1)$. 
\end{lem}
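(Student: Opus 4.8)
The plan is to fix an arbitrary compact interval $[a,b] \subset (0,1)$ (every compact subset of $(0,1)$ is contained in such an interval) and to show that the entire family $\{\phi_i\}$ is Lipschitz on $[a,b]$ with a single constant $L$ depending only on $D_0$, $D_1$ and $b$. Since the pointwise infimum of functions that are uniformly $L$-Lipschitz is itself $L$-Lipschitz as soon as it is finite-valued, the statement will follow.

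First I would establish the uniform slope bound. Fix an index $i$ and take $a \le x < y \le b$. Because $\phi_i$ is convex, its secant slopes are nondecreasing, so comparing the secant over $[x,y]$ with the secant over $[y,1]$ gives
\[
0 \le \frac{\phi_i(y)-\phi_i(x)}{y-x} \le \frac{\phi_i(1)-\phi_i(y)}{1-y},
\]
where the left inequality uses that $\phi_i$ is nondecreasing. Since $D_0 < \phi_i(y)$ and $\phi_i(1) < D_1$, the numerator on the right is at most $D_1 - D_0$, while $1-y \ge 1-b$; hence the difference quotient is bounded by $L := (D_1-D_0)/(1-b)$, uniformly in $i$ and in $x,y \in [a,b]$. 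Consequently $|\phi_i(y)-\phi_i(x)| \le L\,|y-x|$ for every $i$.

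Next I would check that $\phi$ is finite on $[a,b]$. Writing $\underline{c} := \inf_i c_i > -\infty$, we have $\phi_i(x)+c_i > D_0 + \underline{c}$ for all $i$, so $\phi(x) \ge D_0 + \underline{c} > -\infty$; and $\phi(x)$ is bounded above by any single term $\phi_{i_0}(x)+c_{i_0} < D_1 + c_{i_0}$. Thus $\phi$ is real-valued. Finally, the standard infimum argument closes the proof: for $x,y \in [a,b]$ and any index $i$,
\[
\phi(x) \le \phi_i(x)+c_i \le \phi_i(y)+c_i + L\,|x-y|;
\]
taking the infimum over $i$ on the right-hand side yields $\phi(x) \le \phi(y) + L\,|x-y|$, and interchanging the roles of $x$ and $y$ gives $|\phi(x)-\phi(y)| \le L\,|x-y|$, so $\phi$ is $L$-Lipschitz on $[a,b]$.

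The only genuinely substantive step is the uniform slope estimate, and the main point to get right there is that convexity forces the secant slopes to increase, so the steepest behaviour on $[a,b]$ is controlled by the secant reaching out to the right endpoint $1$, where the two-sided bound $D_0 < \phi_i < D_1$ caps the total increment; monotonicity then supplies the matching lower bound $0$. Everything after that—finiteness and the passage to the infimum—is routine.
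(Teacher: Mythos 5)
Your proof is correct and takes essentially the same route as the paper's: both establish an equi-Lipschitz bound for the whole family $\{\phi_i\}$ from convexity plus the two-sided bound $D_0 < \phi_i < D_1$ (the paper compares $y$ with the nearby point $z = y+\varepsilon$ on $[\varepsilon,1-\varepsilon]$, you compare the secant over $[x,y]$ with the one reaching the right endpoint $1$, which yields the identical constant $(D_1-D_0)/(1-b)$ when $b = 1-\varepsilon$), and then pass the uniform Lipschitz constant through the pointwise infimum. Your explicit check that $\phi$ is finite-valued --- the one place the hypothesis that the $c_i$ are bounded below is actually used --- is a small step the paper leaves implicit, and is a welcome addition.
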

\begin{proof}
Let $0 <\varepsilon <1$, $\varepsilon \leq x < y \leq 1-\varepsilon$, and $z=y+\varepsilon$. Then $z \leq 1$ and
\[
y=(1-\lambda) z + \lambda x, \mbox { with } \lambda=\frac{\varepsilon}{\varepsilon+ y - x}.
\]
Hence $\phi_i(y) \leq (1-\lambda) \phi_i(z) + \lambda \phi_i(x)$, that gives
\[
\phi_i(y)-\phi_i(x) \leq (1-\lambda) [\phi_i(z) - \phi_i(x)] \leq \frac{y-x}{\varepsilon} (D_1-D_0).
\]
It follows that the family $\{ \phi_i(x) + c_ i \}_i $ is equi-Lipschitz on $[\varepsilon, 1- \varepsilon]$, which implies that also $\phi(x)=\sup_{i}\big(\phi_i(x) + c_i\big)$ is Lipschitz on $[\varepsilon, 1- \varepsilon]$; letting $\varepsilon \to 0$ gives the thesis.
\end{proof}

\newpage

\bibliographystyle{plainnat}
\bibliography{biblio}

\end{document}